\newcommand{\irule}[2]%
   {\mkern-2mu\displaystyle\frac{#1}{\vphantom{,}#2}\mkern-2mu}
\newcommand{\system}{\textsc{Mediator}\xspace}
\newcommand{\termdelim}{~|~}
\newcommand{\ruledelim}{\quad}
\newcommand{\ins}{\texttt{ins}}
\newcommand{\del}{\texttt{del}}
\newcommand{\upd}{\texttt{upd}}
\renewcommand{\S}{{S}}
\newcommand{\M}{\vec{T}_U}
\newcommand{\Q}{\vec{T}_Q}
\newcommand{\awr}[2]{\langle {#1} \triangleleft {#2} \rangle}
\newcommand{\denot}[1]{{\llbracket #1 \rrbracket}}
\newcommand{\instance}{\Delta}
\newcommand{\lamvec}[2]{{\lambda \vec{#1}. \xspace #2} \xspace}
\newcommand{\tra}{\mathcal{T}_{\emph{RA}}}
\newcommand{\attnum}{\varsigma}
\newcommand{\universe}{\mathcal{P}}
\newcommand{\nil}{[ \ ]}
\newcommand{\hoare}[3]{\{{#1}\}~{#2}~\{{#3}\}}
\newcommand{\qbs}{\textsc{Qbs}\xspace}
\newcommand{\fiat}{\textsc{Fiat}\xspace}
\newcommand{\cosette}{\textsc{Cosette}\xspace}
\newcommand*{\comments}{} %comment to remove comments
\newcommand{\yuepeng}[1]{\textcolor{magenta}{\textbf{YUEPENG:} #1}}
\newcommand{\isil}[1]{\textcolor{red}{\textbf{I\c{S}IL:} #1}}
\newcommand{\todo}[1]{\textcolor{red}{#1}}
\newcommand{\yuepeng}[1]{}
\newcommand{\isil}[1]{}
\newcommand{\todo}[1]{}
\newcommand{\emptydb}{\emptyset}
\begin{document}

\title{Verifying Equivalence of Database-Driven Applications}

\author{Yuepeng Wang}
\affiliation{
  \institution{University of Texas at Austin}
  \country{USA}
}
\email{ypwang@cs.utexas.edu}

\author{Isil Dillig}
\affiliation{
  \institution{University of Texas at Austin}
  \country{USA}
}
\email{isil@cs.utexas.edu}

\author{Shuvendu K. Lahiri}
\affiliation{
  \institution{Microsoft Research}
  \country{USA}
}
\email{shuvendu.lahiri@microsoft.com}

\author{William R. Cook}
\affiliation{
  \institution{University of Texas at Austin}
  \country{USA}
}
\email{wcook@cs.utexas.edu}

%% abstract must show up before \maketitle command

\begin{abstract}
This paper addresses the problem of verifying equivalence between a pair of programs that operate over databases with different schemas. This problem is particularly important in the context of  web applications, which typically undergo database refactoring either for performance or maintainability reasons. While  web applications should have the same externally observable behavior before and after schema migration,  there are no existing tools for proving equivalence of such programs. This paper takes a first step towards solving this problem by formalizing the equivalence and refinement checking problems for database-driven applications.
We also propose a proof methodology based on the notion of \emph{bisimulation invariants} over \emph{relational algebra with updates} and describe a technique for synthesizing such bisimulation invariants. We have implemented the proposed technique in a tool called \system for  verifying equivalence between database-driven applications written in our intermediate language and evaluate our tool on 21 benchmarks extracted from textbooks and real-world web applications. Our results show that the proposed methodology can successfully verify 20 of these benchmarks.

\end{abstract}

\begin{CCSXML}
<ccs2012>
<concept>
<concept_id>10003752.10010124.10010138.10010142</concept_id>
<concept_desc>Theory of computation~Program verification</concept_desc>
<concept_significance>500</concept_significance>
</concept>
<concept>
<concept_id>10011007.10010940.10010992.10010998.10010999</concept_id>
<concept_desc>Software and its engineering~Software verification</concept_desc>
<concept_significance>500</concept_significance>
</concept>
<concept>
<concept_id>10011007.10011074.10011099.10011692</concept_id>
<concept_desc>Software and its engineering~Formal software verification</concept_desc>
<concept_significance>500</concept_significance>
</concept>
</ccs2012>
\end{CCSXML}

\ccsdesc[500]{Theory of computation~Program verification}
\ccsdesc[500]{Software and its engineering~Software verification}
\ccsdesc[500]{Software and its engineering~Formal software verification}

%% Keywords
\keywords{Program Verification, Equivalence Checking, Relational Databases.}

\maketitle

\section{Introduction}\label{sec:intro}

Due to its usefulness in several application domains, the problem of checking equivalence between programs has been the subject of decades of  research in the programming languages community. For instance, in the context of translation validation, equivalence checking allows us to prove that the low-level code produced by an optimizing compiler is semantics-preserving with respect to the original program~\cite{tv-necula,tv-pnueli,tv-rinard,tv-zuck}. Other important applications of equivalence checking include \emph{regression verification}~\cite{regression1,regression2} and \emph{semantic differencing}~\cite{symdiff1,symdiff2,symdiff3}.

While there has been significant progress in the area of equivalence checking, existing techniques cannot be used to verify the equivalence of \emph{database-driven programs}. To see why verifying equivalence is important in this context, consider the scenario
in which a web application interacts with a relational database  to dynamically render a web page, and suppose that the database schema needs to be changed either for performance or maintainability reasons. 
In this case,  the developer  will need to migrate the database to the new schema and also re-implement the code that interacts with the database \emph{without} changing the observable behavior of the application. While this task of \emph{database refactoring} arises very frequently during the life cycle of web applications, it is also known to be quite hard and error-prone~\cite{ambler2007test,wikitech}, and several textbooks have been published on this topic~\cite{refactordb, refactorsql}.  As pointed out by Faroult and L'Hermite, \emph{``database applications are a difficult ground for refactoring''} because \emph{``small changes are not always what they appear to be''} and \emph{``testing the validity of a change may be difficult''}~\cite{refactorsql}.

Motivated by the prevalence of database-driven applications in the real world and the frequent need to perform schema migration, this paper proposes a new technique for verifying the equivalence of database-driven applications. Given a pair of  programs $P$, $P'$ that interact with two different databases, we would like to {automatically} construct a proof that $P$ and $P'$ are semantically equivalent.  Unfortunately, the problem of equivalence checking for database-driven applications introduces several challenges that are not addressed by previous work: First, it is unclear how to define equivalence in this context, particularly when the two database schemas are different.
Second, database-driven applications typically use declarative query languages, such as SQL, but, to the best of our knowledge, there are no automated reasoning tools for a rich enough fragment of SQL that captures realistic use cases.

In this paper, we formalize the equivalence checking problem for database-driven applications and propose a  verification algorithm for proving equivalence. Suppose that we are given two programs $P$, $P'$ that interact with databases $D$, $D'$. Let us also assume that each program comprises a set of database transactions (query or update) such that every transaction $t$ in $P$ has a corresponding transaction $t'$ in $P'$.
Our goal is to prove that  $P$ and $P'$ yield the same result on a pair of corresponding queries $q$, $q'$ whenever
we  execute the same sequence of update operations on $D$ and $D'$.

\begin{figure}[!t]
\centering
\includegraphics[scale=0.6]{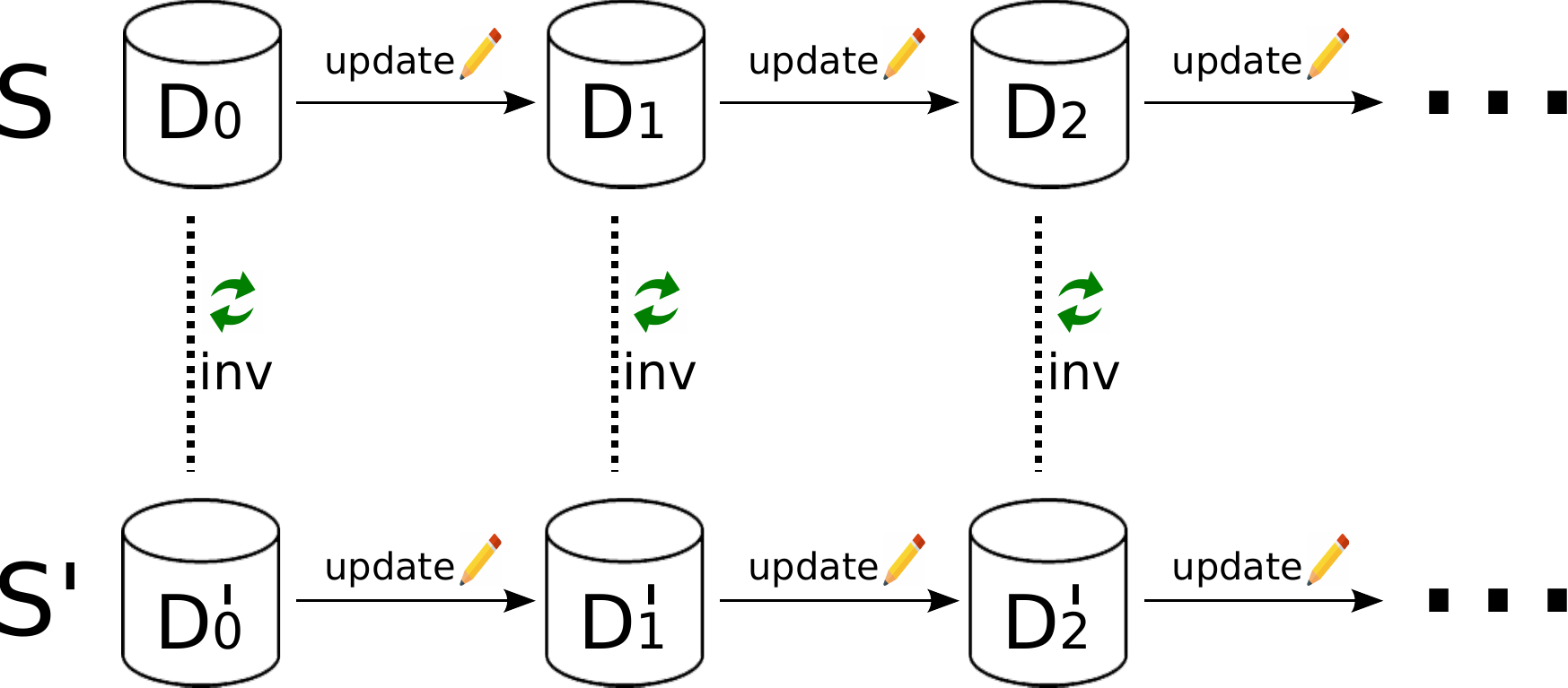}
\caption{Bisimulation invariant between two database-driven applications}
\label{fig:bisimilar}
\end{figure}

To prove equivalence between a pair of database-driven applications, our approach infers a so-called \emph{bisimulation invariant} that relates states of the two programs. In this context, program states correspond to database instances, so the bisimulation invariants relate a pair of database instances. As shown in Figure~\ref{fig:bisimilar},  our bisimulation invariants are preserved after each database transaction, and, in addition, they are strong enough to establish that any corresponding pair of queries must  yield the same result.

In the context of software verification, program invariants are typically expressed in some first-order theory supported by modern SMT solvers. Unfortunately, since the bisimulation invariants that we require in this context relate a pair of databases, they typically involve non-trivial relational algebra operators, such as join and selection.
To solve this difficulty, we consider the \emph{theory of relational algebra with updates}, $\tra$, and present an SMT-friendly encoding of $\tra $ into the theory of lists, which is supported by many SMT solvers.

Once we automate reasoning in relational algebra with updates,
a remaining challenge is to automatically infer  suitable bisimulation invariants. In this paper, we use the framework of monomial predicate abstraction to automatically synthesize conjunctive quantifier-free  invariants that relate two database states~\cite{mpa1,mpa2,mpa3}. Specifically, we identify a family $F$ of predicates that are useful for proving equivalence and  generate the strongest conjunctive bisimulation invariant over this universe. Towards this goal,we define   a strongest post-condition semantics of database transactions and automatically generate verification conditions whose validity establishes the correctness of a candidate bisimulation invariant.

We have implemented the proposed approach in a tool called \system for verifying equivalence between applications written in our intermediate representation (IR), which abstracts database-driven applications as a fixed set of queries and updates to the database. To evaluate our methodology,  we consider 21 database-driven applications translated into our IR and show that \system can successfully verify equivalence between benchmarks extracted from real-world web applications with up to hundreds of transactions. We also show that \system can handle challenging textbook examples that illustrate a wide spectrum of structural changes to the database schema. Overall, our experiments show that the proposed method is useful and practical:   \system can successfully verify the desired property {for 10 out of 11} real-world benchmarks in under 50 seconds on average.

To summarize, this paper makes the following key contributions:

\begin{itemize}
\item We introduce the equivalence and refinement checking problems for database-driven applications with different schemas and conduct an empirical study that demonstrates the practical relevance of this problem.
\item We present a sound and relatively complete proof methodology for showing equivalence of database-driven applications.
\item We show how to enable automated reasoning over relational algebra with updates ($\tra$) by presenting an SMT-friendly encoding of $\tra$ into the theory of lists.
\item We define a strongest postcondition semantics of database update operations and show how to automatically infer suitable simulation invariants for verifying equivalence and refinement.
\item We implement our approach in a tool called \system and experimentally evaluate it on 21 benchmarks taken from real-world applications and textbook examples. Our evaluation shows that \system can verify the desired property for {20 out of 21} benchmarks.
\end{itemize}

\section{Motivating Example} \label{sec:overview}

Consider a database-driven Connected Diagnostics Platform (\texttt{cdx}) \footnote{\url{https://github.com/instedd/cdx/commit/3006277ce4d1c8b097f1e7243ff205bf657ad3c0}} for notifying  patients about the results of their medical tests and alerting community health workers about (anonymous) positive test results in their area. This application interacts with a database that stores information about subscribers, laboratories, institutions and so on. In an earlier version of the application found on Github, the underlying database contains 16 relations and 125 attributes, but, at some point, the developers see a need to change the database schema and migrate it to a new format containing 17 relations and 131 attributes. Specifically,  the earlier version of the database contains a \texttt{Subscriber} relation with three attributes, namely \texttt{sid}, which corresponds to the subscriber id, \texttt{sname}, which is the name of the subscriber, and \texttt{filter}, which is used to filter out irrelevant subscriptions. In the updated version, the developers decide to refactor this information into two separate relations:

\vspace{0.1in}
\centerline{\texttt{Subscriber(sid, sname, fid\_fk)} \qquad 
\texttt{Filter(fid, fname, params)}
}
\vspace{0.1in}

The new database schema now contains an additional \texttt{Filter} relation, and the \texttt{fid\_fk} attribute in  \texttt{Subscriber}  is now a foreign key referring to the corresponding filter in the \texttt{Filter} relation. \footnote{The actual 
refactoring  from the Github commit history involves many other changes to the database schema, but we only consider this single modification to simplify the example.}

\begin{figure}
\centering

\begin{subfigure}[b]{.49\linewidth}
\footnotesize
\begin{verbatim}

void createSub(int id, String name, String fltr)
  INSERT INTO Subscriber VALUES(id, name, fltr);


void deleteSub(int id)
  DELETE FROM Subscriber WHERE sid=id;


void updateSub(int id, String name, String fltr)
  UPDATE Subscriber SET filter=fltr WHERE sid=id;
  UPDATE Subscriber SET sname=name WHERE sid=id;


List<Tuple> getSubName(int id)
  SELECT sname FROM Subscriber WHERE sid=id;

List<Tuple> getSubFilter(int id)
  SELECT filter FROM Subscriber WHERE sid=id;

\end{verbatim}

\caption{Before Refactoring}
\end{subfigure}
\vrule~
\begin{subfigure}[b]{.49\linewidth}    
\footnotesize
\begin{verbatim}
void createSub(int id, String name, String fltr)
  INSERT INTO Subscriber VALUES(id, name, UUID_x);
  INSERT INTO Filter VALUES(UUID_x,
         "Filter for " + name + " subscriber", fltr);

void deleteSub(int id)
  DELETE FROM Filter WHERE fid IN
    (SELECT fid_fk FROM Subscriber WHERE sid=id);
  DELETE FROM Subscriber WHERE sid=id;

void updateSub(int id, String name, String fltr)
  UPDATE Filter SET params=fltr WHERE fid IN
    (SELECT fid_fk FROM Subscriber WHERE sid=id);
  UPDATE Subscriber SET sname=name WHERE sid=id;

List<Tuple> getSubName(int id)
  SELECT sname FROM Subscriber WHERE sid=id;

List<Tuple> getSubFilter(int id)
  SELECT params FROM Filter JOIN Subscriber
    ON fid=fid_fk WHERE sid=id;
\end{verbatim}

\caption{After Refactoring}
\end{subfigure}

\caption{Sample Database Refactoring. \texttt{UUID\_x} is a unique \texttt{fid} of \texttt{Filter} relation.}
\label{fig:db-abs}
\end{figure}

After refactoring the database schema in this manner, the developers also re-implement the relevant parts of the code that interact with this database. In particular, Figure~\ref{fig:db-abs} shows the relevant functionality before and after the database refactoring. Both versions of the code contain three methods for updating the database, namely \texttt{createSub}, \texttt{deleteSub}, and \texttt{updateSub}, and two methods (\texttt{getSubName} and \texttt{getSubFilter}) for querying the database. However, the underlying implementation of these methods has changed due to the migration of the schema to a new format.  Nonetheless, we would like to be able to show that the application returns the same query results  before and after the schema migration.  This verification task is non-trivial because the database transactions in the two implementations are often structurally different and operate over different relations.

Let us now see how \system can be used to verify the equivalence of the two versions of the \texttt{cdx} application before and after  database refactoring. As mentioned in Section~\ref{sec:intro}, our method infers a \emph{bisimulation invariant} that relates two versions of the database. In the remainder of this discussion, let us use primed variables to refer to database relations and attributes in the  refactored database. For instance, \texttt{Subscriber'} refers to the version of the original \texttt{Subscriber} relation in the refactored database.

To come up with a suitable bisimulation invariant, \system first generates a finite universe of atomic predicates that \emph{could} be used to relate the two versions of the database. For example, one possible predicate is $\Pi_{sid, sname}(Subscriber) = \Pi_{sid', sname'}(Subscriber')$, which states that the \texttt{sid} and \texttt{sname} attributes in relation \texttt{Subscriber} correspond to \texttt{sid'} and \texttt{sname'} in \texttt{Subscriber'}, respectively. Given such  predicates, \system then tries to find a conjunctive formula that is provably a valid bisimulation invariant. 

As an example, let us consider the following candidate formula $\Phi$ in the  theory of relational algebra, which we will describe later in this paper:
\begin{gather*}
     \Pi_{sid, sname}(Subscriber) = \Pi_{sid', sname'}(Subscriber') \ \land \\
    \Pi_{sid, sname, filter}(Subscriber) = \Pi_{sid', sname', params'}(Subscriber' \Join Filter')
\end{gather*}
Essentially, this formula states that the {\tt sid} and {\tt sname} attributes of the {\tt Subscriber} relation are unchanged, and the  {\tt Subscriber} relation in the original database can be obtained by taking the natural join of {\tt Subscriber} and {\tt Filter} relations in the refactored database and then projecting the relevant attributes.

Using our verification methodology, we can automatically prove that the candidate formula $\Phi$ corresponds to a valid bisimulation invariant. In particular, assuming that $\Phi$ holds before each pair of  update operations $U, U'$ from the original and revised implementations, we can show that $\Phi$ still continues to hold after executing $U$ and $U'$.  In other words, this means that $\Phi$ is an \emph{inductive} bisimulation invariant. To prove the inductiveness of $\Phi$, we use a strongest postcondition semantics for database update operations as well an automated theorem prover for the theory of relational algebra with updates.

After finding an inductive bisimulation invariant $\Phi$, \system  still needs to ensure that $\Phi$ is  strong enough to prove equivalence. For this purpose, it considers every pair of queries $Q, Q'$ from the old and revised versions of the application and tries to prove that $Q$ and $Q'$ yield the same results.
Using the axioms of theory of relational algebra with updates, it can be shown that $\Phi \Rightarrow Q=Q'$ is logically valid for both of the queries \texttt{getSubName} and \texttt{getSubFilter} in this application.
Hence, \system is able to prove equivalence between these two programs even though they use databases that operate over different schemas.

\section{Problem Statement} \label{sec:formalism}

In this section, we formalize the syntax and semantics of database-driven applications and precisely define the {equivalence and refinement checking} problems in this context.

\subsection{Language Syntax}

\begin{figure}
\centering
\[
\begin{array}{l l c l}
    \emph{Program} & P &:=& (\S, \M, \Q) \\
    \emph{Schema} & \S & := & R \rightarrow \{ a_1: \tau_1; \ldots; a_n: \tau_n \} \\
    \emph{Transaction} & T & := & \lambda \vec{v}.\ U \termdelim \lambda \vec{v}.\ Q \\
    \emph{Update} & U &:=& \ins(R, \{a_1: v_1, \ldots, a_n: v_n \}) \termdelim \del(R, \phi) \termdelim \upd(R, \phi, a, v) \termdelim U;U \\
    \emph{Query} & Q &:=& R \termdelim \Pi_{\psi}(Q) \termdelim \sigma_\phi(Q) \termdelim Q \Join_\phi Q \termdelim Q \cup Q \termdelim Q - Q \\
    \emph{Attribute list} & \psi &:=& a \termdelim \psi, \psi \\
    \emph{Predicate} & \phi &:=& p \termdelim \phi \land \phi \termdelim \phi \lor \phi \termdelim \neg \phi \\
    \emph{Atomic predicate} & p &:=& a \odot a \termdelim a \odot v \termdelim a \in Q \\
    \emph{Operator} & \odot &:=& \leq \termdelim < \termdelim = \termdelim \neq \termdelim > \termdelim \geq \\
\end{array}
\]
\vspace{-0.05in}
\[
\begin{array}{r c l c r c l}
 \tau & \in & \{ Int, String, \ldots \} & & R & \in & \emph{Relation} \\
    a & \in & \emph{Attribute} & & v & \in & \emph{Variable} \cup \emph{Constant}
\end{array}
\]
\caption{Grammar of the language used in our formalization}
\label{fig:ir}
\end{figure}

In the remainder of this paper, we represent a database-driven application $P$ as a tuple $(\S, \M, \Q)$, where $\S$ is the schema of the underlying database, $\M$ is a vector of database update transactions, and $\Q$ is a vector of database queries (see Figure~\ref{fig:ir}). We collectively refer to any update or query in $\M \cup \Q$ as a \emph{database transaction} and  denote the $i$'th transaction in $\M$ (resp. in $\Q$) as $U_i$ (resp. $Q_i$). Let us now take a closer look at the syntax of the language from Figure~\ref{fig:ir}.

\vspace{0.1in}\noindent
\emph{Database schema.} The database schema $\S$ provides a logical view of how the database organizes its data. In particular, the schema describes all relations (i.e., tables) stored in  the database as well as the typed attributes for each relation. More precisely, we represent the schema $\S$ as a mapping from table names $R$ to their corresponding record types  $\{a_1: \tau_1; \ldots; a_n: \tau_n \}$, which indicates that attribute $a_i$ of table $R$ has type $\tau_i$. In the rest of the paper, we use the notation $\emph{dom}(\S)$ to denote the set of tables stored in the database.

\vspace{0.1in}\noindent
\emph{Update transactions.}
An update transaction $\lambda \vec{v}. U  \in \M$ contains a {sequence} of database update operations, including insertion, deletion, and modification. Specifically, the language construct  $\ins(R, t)$ models the  insertion of tuple $t$ into relation $R$, where $R \in \emph{dom}(\S)$ and tuple $t$ is represented as a mapping from attributes to symbols (variable or constant). Similarly, the statement $\del(R, \phi)$ removes all tuples satisfying predicate $\phi$ from $R$, and $\upd(R, \phi, a, v)$  assigns value $v$ to the $a$ attribute of all tuples satisfying predicate $\phi$ in $R$. We assume that each database transaction occurs atomically (i.e., either all or none of the updates are committed).

\vspace{0.1in}\noindent
\emph{Query transactions.}
In our language,  query transactions $\lambda \vec{v}. Q \in \Q$ are expressed as  relational algebra expressions involving projection ($\Pi$), selection ($\sigma$), join ($\Join$), union ($\cup$), and difference ($-$) operators. While our language allows general theta joins of the form $R_1 \Join_\phi R_2$, we abbreviate natural joins using the notation $R_1 \Join R_2$.

\subsection{Language Semantics}\label{sec:semantics}

To define the formal semantics of database-driven applications, we first need to define what we mean by an \emph{input} to the programs defined in Figure~\ref{fig:ir}. Since we consider a model in which the user interacts with the application by performing a sequence of updates and queries to the database, we consider a program input to be an \emph{invocation sequence} $\omega$ of the form:
\[
    \omega = (i_1, \sigma_1); \ldots; (i_{n-1}, \sigma_{n-1}); (i_{n}, \sigma_{n})
\]
where each $i_j$ specifies an update transaction $\lambda \vec{v}. U_{i_j}$ for $j \in [1, n)$ and $\sigma_j$ is its corresponding valuation, mapping values of formal parameters $\vec{v}$ to their concrete values. The last element  $(i_{n}, \sigma_{n})$ in the invocation sequence always corresponds to a \emph{query transaction} with corresponding valuation~$\sigma_n$.

\begin{figure}
\centering

\begin{subfigure}[b]{.49\linewidth}
\footnotesize
\begin{verbatim}

void createSub(int id, String name, String fltr)
  ins(Subscriber, (id, name, fltr))


void deleteSub(int id)
  del(Subscriber, sid=id)


void updateSub(int id, String name, String fltr)
  upd(Subscriber, sid=id, filter, fltr)
  upd(Subscriber, sid=id, sname, name)


List<Tuple> getSubName(int id)
\end{verbatim}
$\ \ \ \ \Pi_{\tt sname}(\sigma_{\tt sid=id}(\tt Subscriber))$

\begin{verbatim}

List<Tuple> getSubFilter(int id)
\end{verbatim}
$\ \ \ \ \Pi_{\tt filter}(\sigma_{\tt sid=id}(\tt Subscriber))$

\vspace{0.1in}
\caption{Before Refactoring}
\end{subfigure}
\vrule~\ 
\begin{subfigure}[b]{.49\linewidth}    
\footnotesize
\begin{verbatim}
void createSub(int id, String name, String fltr)
  ins(Filter', (UUID_x, name, fltr))
  ins(Subscriber', (id, name, UUID_x))

void deleteSub(int id)
\end{verbatim}
$\ \ \ \ \del(\texttt{Filter', fid'} \in \Pi_{\tt fid\_fk'}(\sigma_{\tt sid'=id}(\tt Subscriber')))$
\begin{verbatim}
  del(Subscriber', sid'=id)

void updateSub(int id, String name, String fltr)
\end{verbatim}
$\ \ \ \ \upd(\texttt{Filter', fid'} \in \Pi_{\tt fid\_fk'}(\sigma_{\tt sid'=id}(\tt Subscriber')),$
\begin{verbatim}
      params', fltr)
  upd(Subscriber', sid'=id, sname', name)

List<Tuple> getSubName(int id)
\end{verbatim}
$\ \ \ \ \Pi_{\tt sname'}(\sigma_{\tt sid'=id}(\tt Subscriber'))$

\begin{verbatim}

List<Tuple> getSubFilter(int id)
\end{verbatim}
$\ \ \ \ \Pi_{\tt params'}(\sigma_{\tt sid'=id}(\tt Filter' \Join Subscriber'))$

\vspace{0.05in}
\caption{After Refactoring}
\end{subfigure}

\caption{Database Application in Intermediate Language.}
\label{fig:ir-ex}
\end{figure}

\begin{example}
Consider the motivating example from Figure~\ref{fig:db-abs}. This program can be expressed in our intermediate language as shown in Figure~\ref{fig:ir-ex}.  Now, consider the following  invocation sequence, assuming that the five transactions are indexed $1-5$ from top to bottom:
\[
    \omega = (1, [\emph{id} \mapsto 100, \emph{name} \mapsto Alice, \emph{fltr} \mapsto \emph{Filter1}]); (4, [\emph{id} \mapsto 100])
\]
This sequence indicates that the user first invokes the update transaction \texttt{createSub(100, Alice, Filter1)}, followed by the query transaction \texttt{getSubName(100)}.
\end{example}

\vspace{0.1in} \noindent
{\bf \emph{Remark.}} The reader may wonder why our definition of \emph{program input} requires all elements in an invocation sequence to be update transactions, except for the last one which  is always a  query.  Our motivation here is to simplify the formalization of equivalence.  Recall that we consider two programs to be equivalent if they yield the same answers to their corresponding queries under the same sequence of updates to the database. Since we can model invocation sequences that contain multiple queries as two different inputs, it suffices to consider inputs that contain a single query. Furthermore,  we can disregard invocation sequences that do not contain any queries because the program does not ``return" any output on such an input.
\vspace{0.1in}

\begin{figure}
\centering
\[
\begin{array}{lcl}
    & & \\
    \multicolumn{3}{l}{\boxed{\denot{P} :: \emph{Invocation Sequence}~\omega \rightarrow ~\emph{Instance}~\instance \rightarrow \emph{List}}} \\
    & & \\
    \denot{(\S, \M, \Q)}_{(n, \sigma), \instance} &=& \emph{map}(\denot{Q_n}_{\sigma, \instance}, \lambda x. vals(x)) \\
    \denot{(\S, \M, \Q)}_{(n, \sigma); \omega, \instance} &=& \denot{(\S, \M, \Q)}_{\omega, \instance'} \ {\rm where} \  \instance' = \denot{U_n}_{ \sigma, \instance}  \\

    & & \\
    \multicolumn{3}{l}{\boxed{\denot{U} :: \emph{Valuation}~\sigma \rightarrow \emph{Instance}~\instance \rightarrow \emph{Instance}}} \\
    & & \\
    \denot{U_1; U_2}_{ \sigma, \instance} &=& \denot{U_2}_{ \sigma, \instance' } \ {\rm where} \  \instance' = \denot{U_1}_{ \sigma, \instance} \\
    \denot{\ins(R,t)}_{ \sigma, \instance} &=& \instance[R \leftarrow \emph{append}(\instance(R), t[\sigma])] 
    \\
    \denot{\del(R,\phi)}_{ \sigma, \instance} &=& \instance[R \leftarrow \emph{filter}(\instance(R), \lambda x. \neg {\denot{\phi}}_{ \sigma, \instance, x})]
    \\
    \denot{\upd(R,\phi,a,v)}_{\sigma, \instance} &=& \instance 
    \left [
      R \leftarrow \emph{append} \left (
    \begin{array}{l}
        \emph{filter}(\instance(R), \lambda x. {\neg \denot{\phi}}_{\sigma, \instance, x}), \\
        \emph{map}(\emph{filter}(\instance(R), \lambda x.  {\denot{\phi}}_{\sigma, \instance, x}), \lambda x. x[a \leftarrow v[\sigma]])
    \end{array}
    \right )
    \right ] \\

    & & \\
    \multicolumn{3}{l}{\boxed{\denot{Q} :: \emph{Valuation}~\sigma \rightarrow \emph{Instance}~\instance \rightarrow \emph{Relation}}} \\
    & & \\
    \denot{R}_{ \sigma, \instance} &=& \instance(R) \\
    \denot{\Pi_\psi(Q)}_{\sigma, \instance} &=& \emph{map}(\denot{Q}_{\sigma, \instance}, \lambda x. \emph{filter}(x, \lambda y. \emph{contains}(\emph{first}(y), \psi))) \\

    \denot{\sigma_\phi(Q)}_{ \sigma, \instance} &=& \emph{filter}( \denot{Q}_{ \sigma, \instance}, \lambda x. \denot{\phi}_{\sigma, \instance, x})\\
    \denot{Q_1 \times Q_2}_{ \sigma, \instance} &=& \emph{foldl} (\lambda ys. \lambda y. \emph{append}(ys, \emph{map}(\denot{Q_2}_{\sigma, \instance}, \lambda z. \emph{merge}(y,z))), \ \nil,  \ \denot{Q_1}_{ \sigma, \instance}) \\
    \denot{Q_1 \Join_\phi Q_2}_{ \sigma, \instance} &=& \denot{\sigma_\phi(Q_1 \times Q_2)}_{\sigma, \instance} \\
    \denot{Q_1 \cup Q_2}_{ \sigma, \instance} &=& \emph{append}(\denot{Q_1}_{\sigma, \instance}, \denot{Q_2}_{\sigma, \instance}) \\
    \denot{Q_1 - Q_2}_{\sigma, \instance} &=&  \emph{foldl} \ (\lambda ys. \lambda y. \emph{delete}(y, ys), \denot{Q_1}_{ \sigma, \instance},  \denot{Q_2}_{\sigma, \instance})

\\

    & & \\
\multicolumn{3}{l}{\boxed{\denot{\phi} :: \emph{Valuation}~\sigma \rightarrow \emph{Instance}~\instance \rightarrow \emph{Tuple}~x \rightarrow \emph{Predicate}}} \\
    & & \\
    \denot{a_1 \odot a_2}_{\sigma, \instance, x} &=& \emph{lookup}(x,a_1) \odot \emph{lookup}(x, a_2) \\
    \denot{a \odot v}_{\sigma, \instance, x} &=& \emph{lookup}(x, a) \odot v[\sigma] \\
    \denot{a \in Q}_{ \sigma, \instance, x} &=& \emph{contains}(\emph{lookup}(x,a), ~ \emph{map}(\denot{Q}_{ \sigma, \instance}, \lambda y. \emph{head}(\emph{vals}(y))) \\
    \denot{\phi_1 \land \phi_2}_{ \sigma, \instance, x} &=& \denot{\phi_1}_{\sigma, \instance, x} \land \denot{\phi_2}_{\sigma, \instance, x} \\
    \denot{\phi_1 \lor \phi_2}_{\sigma, \instance, x} &=& \denot{\phi_1}_{ \sigma, \instance, x} \lor \denot{\phi_2}_{ \sigma, \instance, x} \\
    \denot{\neg \phi}_{\sigma, \instance, x} &=& \neg \denot{\phi}_{ \sigma, \instance, x} \\

\end{array}
\]
\caption{Denotational semantics of database-driven applications. Our semantics are defined in terms of the standard list combinators $\emph{map}, \emph{append}, \emph{filter}$, $\emph{foldl}$, and $\emph{contains}$. Given a map  $x$, we write $\emph{vals}(x)$ to denote the list of values stored in the map, and we think of a map as a list of (key, value) pairs.    The function  $\emph{delete}(y, ys)$ removes the first occurence of $y$ in list $ys$.  Given two maps $y,z$ with disjoint keys, $\emph{merge}(y,z)$ generates a new map that contains all key-value pairs in $y$ and $z$. }
\label{fig:semantics}
\end{figure}

Figure~\ref{fig:semantics} defines the denotational semantics for the language presented in Figure~\ref{fig:ir}. We use the notation $\denot{P}_\omega$ to represent the result of the last query in  $\omega$ after performing all updates  on an \emph{empty database}. Since any {reachable} database state can be modeled using a suitable sequence of insertions to an empty database,  this assumption does not result in a loss of generality.

In our semantics, we model \emph{database instances} $\instance$ as a mapping from relation names to a list of tuples.~\footnote{We model relations as lists rather than bags because many libraries provide database interfaces based on ordered data structures.} Similarly, we  model tuples as a mapping from attribute names to their corresponding values. Given a program $P$, an invocation sequence $\omega = \omega'; (n, \sigma)$, and a database instance $\instance$, we first obtain a new instance $\instance'$ by  running $\omega'$ on $\instance$ and then evaluate the query $Q_n$ on database instance $\instance'$ and valuation $\sigma$. Observe that the result of a program is represented  as a list of lists rather than as relations (list of maps). That is, our semantics disregards the names of attributes to enable meaningful comparison between programs over different schemas.

The semantics for update transactions in Figure~\ref{fig:semantics} are described using the familiar list combinators, such as \emph{append}, \emph{filter}, \emph{map}, and \emph{fold}. In particular, $\denot{U}_{ \sigma, \instance}$ yields the  database instance after executing update operation $U$ with input $\sigma$ on database $\instance$. For example,  consider the semantics for $\ins(R, t)$: To obtain the new database instance, we first evaluate $t$ under valuation $\sigma$, where the notation $t[\sigma]$ denotes applying substitution $\sigma$ to term $t$. The entry for relation $R$ in the new database instance is obtained by appending the tuple $t[\sigma]$ to table $\instance(R)$, which is represented as a list of tuples. Similarly, $\del(R, \phi)$ filters from list $R$ the set of all tuples that do not satisfy predicate $\phi$. Finally, $\upd(R, \phi, a, v)$ first obtains a new relation $R_1$ that contains all tuples in $R$ that do not satisfy $\phi$.  It then also filters out all tuples of $R$ that satisfy predicate $\phi$, and updates the $a$ attribute of each such tuple to a new value $v[\sigma]$. The new entry for $R$ is then obtained by concatenating these two lists.

Let us now turn our attention to the semantics of query transactions (third part of Figure~\ref{fig:semantics}). The semantics are defined inductively, with the first rule for $R$ being the base case. Since $R$ corresponds to the name of a database table, we obtain the query result by simply looking up  $R$ in  $\instance$. As another example,  consider the semantics of the selection ($\sigma$) operator. Given a query of the form $\sigma_{\phi}(Q)$, we first recursively evaluate $Q$ and obtain the query result $T = \denot{Q}_{\sigma, \instance}$.
We then evaluate the predicate $\phi$ under $\sigma$ and obtain a \emph{symbolic} predicate $p = \denot{\phi}_{ \sigma, \instance, x}$. In particular, predicate $p$ is symbolic in the sense that it refers to a variable $x$, which ranges over tuples in $T$. We obtain the final query result by filtering out those rows of $T$ that do not satisfy predicate~$\lambda x. p$.

The final part of Figure~\ref{fig:semantics} describes predicate semantics inductively. For instance,  consider a predicate of the form $a \odot v$, where $a$ is an attribute, $\odot$ is a (logical) binary operator, and $v$ is a symbol (variable or constant). {Since the predicate  takes as input a tuple $x$}, we evaluate attribute $a$ by looking up $a$ in $x$, which is represented as a mapping from attributes to values.  Thus, the evaluation of the predicate is given by $\emph{lookup}(x, a) \odot v[\sigma]$, where the notation $v[\sigma]$ applies substitution $\sigma$ to symbol $v$.

\subsection{Equivalence and Refinement}
Having defined the semantics of database-driven applications, we are now ready to precisely state
the notion of semantic equivalence in this context:

\begin{definition}\textbf{(Program equivalence)}\label{def:equiv}
A database-driven program $P'$ is said to be semantically equivalent to another program $P$, denoted $P' \simeq P$, if and only if  executing $\omega$ on $P'$ yields the same  result as executing $\omega$ on $P$ for any invocation sequence $\omega$, i.e.,
\[
    P' \simeq P \triangleq \forall \omega . ~~ \llbracket P' \rrbracket_{\omega} = \llbracket P \rrbracket_\omega
\]
\end{definition}

In the above definition, we assume that $P$ and $P'$ have the same number of query and update transactions.~\footnote{Note that this assumption is not a fundamental limitation. If the number of transactions in $P$ and $P'$ are different, we can  either allow users to specify the correspondence between transactions in $P$ and $P'$, or alternatively, we can enumerate all possible correspondences and see if we can prove equivalence under some (possibly one-to-many or many-to-one) mapping between transactions in $P$ and $P'$.
} If this condition does not hold, we can immediately conclude that $P$ and $P'$ are not equivalent because some inputs that are valid for $P$ are not valid for $P'$ or vice versa. We also assume that $P$ and $P'$ have   transactions that are supposed to be functionally equivalent at the same index; otherwise, the transactions can be syntactically re-arranged. Under these assumptions, our definition effectively states that programs $P$ and $P'$ are equivalent whenever their corresponding queries yield the same result under the same sequence of update transactions to the database.

While there are many real-world scenarios in which we would like to prove equivalence, there are also some cases where one application \emph{refines} the other. For instance, consider a situation in which a web application developer  changes the database schema for performance reasons, but also decides to add some new piece of information to the underlying database such that query results also include this new information. In this scenario, the updated version of the application will  \emph{not} be semantically equivalent to its prior version, but we would still like to verify that adding new features does not break existing functionality. Towards this goal, we also formally define what it means for a database-driven program $P'$ to \emph{refine} another program $P$.

\begin{definition}\textbf{(Valuation refinement)}
Consider two valuations $\sigma$ and $\sigma'$. We say that $\sigma'$ is a refinement of $\sigma$, denoted $\sigma' \preceq \sigma$, if and only if $\sigma'$ maps the variables that occur in $\sigma$ to the same values as  in $\sigma$. In other words,
\[
    \sigma' \preceq \sigma \triangleq \forall x \in \emph{dom}(\sigma), ~~ \sigma'(x) = \sigma(x)
\]
\end{definition}

\begin{definition}\textbf{(Input refinement)}
Given invocation sequences $\omega = (i_1, \sigma_1) (i_2, \sigma_2) \ldots (i_n, \sigma_n)$ and $\omega' = (i'_1, \sigma'_1) (i'_2, \sigma'_2) \ldots (i'_n, \sigma'_n)$, we say that $\omega'$ refines $\omega$, denoted  $\omega' \preceq \omega$, if and only if $\omega'$ has the same index sequence as $\omega$ and the valuations in $\omega'$ refine the corresponding valuations in $\omega$. That is,
\[
    \omega' \preceq \omega \triangleq \forall k \in [1, n]. ~~ i_k = i'_k \land \sigma'_k \preceq \sigma_k
\]
\end{definition}

Using these  definitions, we can now also state what it means for an application to refine another one:

\begin{definition}\textbf{(Program refinement)}\label{def:ref}
Program $P'$ is said to refine another program $P$, denoted  $P' \preceq P$, if and only if,
for any invocation sequences $\omega', \omega$ satisfying $\omega' \preceq \omega$, executing $\omega$ on $P$ yields a relation that is a projection of executing $\omega'$ on $P'$, i.e.,
\[
    P' \preceq P \triangleq \forall \omega, \omega'. ~~ \omega' \preceq \omega \rightarrow \exists L. ~  \llbracket \Pi_L(P') \rrbracket_{\omega'} = \llbracket P \rrbracket_\omega
\]
\end{definition}
\noindent where $\Pi_L((\S, \M, \Q)) = (\S, \M, \Q')$ such that $Q'_i = \Pi_L(Q_i)$ for all $Q_i$ in $\Q$.

As in Definition~\ref{def:equiv}, we require that $\omega'$ and $\omega$ are valid inputs for $P'$ and $P$ respectively. However, unlike in Definition~\ref{def:equiv}, we do not assume that $P'$ contains the same number of update and query transactions in $P$. Our definition simply disregards the new transactions that are added by $P'$ and only considers invocation sequences that are valid for both. Thus, intuitively, if an application $P'$ refines $P$, the query results of $P$ can be obtained by applying a projection to the corresponding query results in $P'$.

\section{Proof Methodology}\label{sec:methodology}

Having defined the semantic equivalence and refinement properties for database-driven applications, let us now turn our attention to the proof methodology for showing these properties.

\subsection{Proving Equivalence}

A standard methodology for proving equivalence between any two systems $A, B$ is to find a \emph{bisimulation relation} that relates states in $A$ with those in $B$~{\cite{bisimulation}}. In our case, these systems are database-driven applications, and the states that we need to relate are database instances. Our approach does not directly infer an explicit mapping between database instances, but instead finds a \emph{bisimulation invariant} that (a) is satisfied by pairs of database instances from the two systems, and (b) is strong enough to prove equivalence.

In this paper, we prove that a bisimulation invariant $\Phi$ is valid by showing that it is inductive. That is, $\Phi$ must hold initially, and  assuming that it holds for a pair of databases $\instance, \instance'$, it must continue to hold after executing any pair of corresponding update operations $\lamvec{x}{U}$ and $\lamvec{y}{U'}$.

\vspace{0.05in}
\begin{definition} \textbf{(Inductive bisimulation invariant)}\label{def:inductive} Consider programs $P = (\S, \M, \Q)$ and $P' = (\S', \M', \Q')$ and suppose that $P, P'$ contain a disjoint set of variables (which can be enforced using $\alpha$-renaming if necessary).
    A bisimulation invariant $\Phi$ is said to be \emph{inductive} with respect to programs $P$ and $P'$ if (a) $\Phi$ is satisfied by a pair of empty databases, and (b) the following Hoare triple is valid for all $\lamvec{x}{U_i} \in \M$ and $\lamvec{y}{U_i'} \in \M'$:
\[ 
\{\Phi \land \vec{x} = \vec{y}\} \ \ U_i \ || \  U'_i \ \ \{\Phi\}
\]
\end{definition}

In the above definition, the notation $U || U'$ denotes the parallel execution of updates $\lamvec{x}{U}$ and $\lamvec{y}{U'}$. However, since  programs $P, P'$ contain a disjoint set of variables, $U || U'$ is semantically equivalent to the sequential composition $U ; U'$. Thus, to prove inductiveness, we need to show the validity of the Hoare triple
\[
    \{\Phi \land \vec{x} = \vec{y} \} \ U_i \ ; \ U'_i \ \{\Phi\}
\]
for every pair of updates $\lamvec{x}{U_i}$ and $\lamvec{y}{U_i'}$ in $P$ and $P'$. Also, observe that $\Phi$ must hold for a pair of empty databases in the base case because we assume that the databases are initially empty (recall Section~\ref{sec:formalism}).~\footnote{This assumption is realistic in situations where database migration is performed by calling the new update methods in the application. Otherwise, the base case needs to establish that $\Phi$ holds for the initial databases.}

While there are many possible inductive bisimulation invariants (including \emph{true}, for example), we need a bisimulation invariant that is strong enough to prove equivalence. According to Definition~\ref{def:equiv}, two programs are equivalent if they yield the same result for every pair of corresponding queries $\lamvec{x}{Q}$ and $\lamvec{y}{Q'}$ given the same input. Thus, we can define what it means for a bisimulation invariant to be sufficient in the following way:

\vspace{0.05in}
\begin{definition} \textbf{(Sufficiency)}\label{def:sufficiency}
    A formula $\Phi$ is said to be \emph{sufficient} with respect to programs $P = (\S, \M, \Q)$ and $P' = (\S', \M', \Q')$ if,  for all $\lamvec{x}{Q_i} \in \Q$ and $\lamvec{y}{Q'_i} \in \Q'$, we have:
    \[ (\Phi \land \vec{x} = \vec{y}) \models   Q_i = Q_i' \]
\end{definition}

Our general proof methodology for proving equivalence is to find a \emph{sufficient, inductive bisimulation invariant} between the given pair of programs. If we can find such an invariant $\Phi$, we know that $\Phi$ holds after executing any invocation sequence $\omega$ on $P, P'$, so $\Phi$ must also hold before issuing any database query. Furthermore, since $\Phi$ is a sufficient bisimulation invariant, it implies that any pair of queries yield the same result. Thus, the existence of such a bisimulation invariant $\Phi$ implies that $P,P'$ are semantically equivalent.

\vspace{0.05in}
\begin{theorem} \textbf{(Soundness)}\label{thm:soundness}
Given database applications $P, P'$, the existence of a sufficient, inductive bisimulation invariant $\Phi$ implies $P \simeq P'$.
\end{theorem}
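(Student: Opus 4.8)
The plan is to fix an arbitrary invocation sequence $\omega$ and show $\denot{P}_\omega = \denot{P'}_\omega$, which by Definition~\ref{def:equiv} yields $P \simeq P'$. Write $\omega = (i_1, \sigma_1); \ldots; (i_{n-1}, \sigma_{n-1}); (n, \sigma_n)$, where the first $n-1$ entries are updates and the last is a query. Running this update prefix on the initially empty databases produces two sequences of instances via the semantics of Figure~\ref{fig:semantics}: $\instance_0 = \emptyset$ with $\instance_k = \denot{U_{i_k}}_{\sigma_k, \instance_{k-1}}$ for $P$, and analogously $\instance'_0 = \emptyset$ with $\instance'_k = \denot{U'_{i_k}}_{\sigma_k, \instance'_{k-1}}$ for $P'$.

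The heart of the argument is an invariant-preservation lemma: $(\instance_{n-1}, \instance'_{n-1}) \models \Phi$. I would prove the stronger statement that $(\instance_k, \instance'_k) \models \Phi$ for every $k \in [0, n-1]$ by induction on $k$. The base case $k = 0$ is exactly clause (a) of Definition~\ref{def:inductive}, since $(\emptyset, \emptyset) \models \Phi$. For the inductive step, assume $(\instance_{k-1}, \instance'_{k-1}) \models \Phi$. Because the same valuation $\sigma_k$ instantiates both $U_{i_k}$ (with formals $\vec{x}$) and $U'_{i_k}$ (with formals $\vec{y}$), and because corresponding transactions share an identical original parameter list that has merely been $\alpha$-renamed to disjoint names, the combined state satisfies $\Phi \land \vec{x} = \vec{y}$. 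Clause (b) of Definition~\ref{def:inductive} then supplies the valid Hoare triple $\hoare{\Phi \land \vec{x} = \vec{y}}{U_{i_k} \,||\, U'_{i_k}}{\Phi}$; invoking the paper's observation that disjoint variables make $||$ coincide with the sequential composition $U_{i_k}; U'_{i_k}$ acting independently on $\instance_{k-1}$ and $\instance'_{k-1}$, we obtain $(\instance_k, \instance'_k) \models \Phi$.

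With the lemma in hand, the query step is short. At the final state $(\instance_{n-1}, \instance'_{n-1})$ the invariant $\Phi$ holds, and the shared query valuation $\sigma_n$ again forces $\vec{x} = \vec{y}$, so the sufficiency hypothesis (Definition~\ref{def:sufficiency}) yields $Q_n = Q'_n$ when evaluated on these instances. Since the program output is obtained by post-composing the query result with $\lambda x.\,\emph{vals}(x)$ (the top rule of Figure~\ref{fig:semantics}), the relational equality of $Q_n$ and $Q'_n$ propagates through $\emph{map}(\cdot, \lambda x.\,\emph{vals}(x))$ to give $\denot{P}_\omega = \denot{P'}_\omega$. As $\omega$ was arbitrary, this establishes $P \simeq P'$.

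I expect the main obstacle to be the careful bookkeeping in the inductive step: justifying that a Hoare triple stated over the parallel/sequential composition $U_{i_k} \,||\, U'_{i_k}$ on a single combined state faithfully models the two independent executions $\denot{U_{i_k}}_{\sigma_k, \instance_{k-1}}$ and $\denot{U'_{i_k}}_{\sigma_k, \instance'_{k-1}}$. This requires a frame-style argument that, under disjoint variables and disjoint relation namespaces, neither update reads or writes the other's state, so their effects commute and decompose over the two halves of the combined instance. A secondary subtlety is pinning down the meaning of the entailment $Q_n = Q'_n$ across two different schemas; one must verify that it denotes equality of the underlying value-tuples with attribute names discarded, which is precisely the equivalence that the $\emph{vals}$ projection in the top-level semantics is designed to capture.
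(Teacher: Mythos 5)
Your proposal is correct and takes essentially the same route as the paper's own proof: the paper performs induction on the length of the invocation sequence with the strengthened hypothesis that any $\Phi$-related pair of instances yields equal results, which is just the suffix-oriented restatement of your forward induction establishing $(\instance_k, \instance'_k) \models \Phi$ at every intermediate step. Both arguments use the identical three ingredients — clause (a) of Definition~\ref{def:inductive} for the empty initial databases, the Hoare triple of clause (b) under a shared valuation (with the same observation that disjointness of variables reduces $U \,||\, U'$ to sequential composition) for preservation, and sufficiency (Definition~\ref{def:sufficiency}) at the final query.
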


\begin{proof}
We show that  $\denot{P}_{\omega} = \denot{P'}_{\omega}$ by performing induction on the length of the invocation sequence $\omega$. However, we consider the following strengthened inductive hypothesis: ``If $\Phi$ is satisfied by database instances $\Delta, \Delta'$, then we have $\denot{P}_{\omega, \instance} = \denot{P'}_{\omega, \instance'}$''. Since $\denot{P}_{\omega}$ is the same as $\denot{P}_{\omega, \emptyset}$, and $\Phi$ is satisfied by a pair of empty databases according to Definition~\ref{def:inductive}, this implies $\denot{P}_{\omega} = \denot{P'}_{\omega}$.

For the base case, we have $\omega = (n, \sigma)$. By assumption, $\Delta \cup \Delta'$ is a model of $\Phi$; thus, $\Phi$ holds initially.  Since we evaluate queries $Q_n$ and $Q_n'$ on the same valuation $\sigma$, we also have $\vec{x} = \vec{y}$. Because $\Phi$ is a \emph{sufficient} bisimulation invariant, this implies $Q_i = Q_i'$; hence we have $\denot{P}_{\omega, \instance} = \denot{P'}_{\omega, \instance'}$ for the base case. For the inductive step, suppose the invocation sequence is of the form $(n, \sigma); \omega'$, where $\omega'$ is non-empty. In this case, we have $\denot{P}_{\omega, \instance} = \denot{P}_{\omega', \instance_1}$ and $\denot{P'}_{\omega, \instance'} = \denot{P'}_{\omega', \instance_2}$ where $\instance_1 = \denot{U_n}_{ \sigma, \instance}$ and $\instance_2 = \denot{U_n'}_{ \sigma, \instance'}$. Since we evaluate $U_n$ and $U_n'$ on the same valuation $\sigma$, and since $\Delta \cup \Delta'$ is a model of $\Phi$, we have $\Phi \land \vec{x} = \vec{y}$. Using Definition~\ref{def:inductive}, we therefore know that $\Delta_1 \cup \Delta_2$ is also a model of $\Phi$. Thus, the theorem follows immediately from the inductive hypothesis.
\end{proof}

\begin{theorem} \textbf{(Relative Completeness)}\label{thm:completeness}
 Suppose we have an oracle for  proving any valid Hoare triple and  logical entailment.
If $P \simeq P'$, then there always exists a sufficient, inductive bisimulation invariant $\Phi$ for programs $P, P'$. 
\end{theorem}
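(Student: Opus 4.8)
The plan is to exhibit $\Phi$ explicitly as the \emph{reachability relation} between the two programs and then check that this predicate is both inductive and sufficient. Concretely, I would define $\Phi$ to hold of a pair of instances $(\instance, \instance')$ exactly when there is a common \emph{update-only} prefix $\rho = (i_1, \sigma_1); \ldots; (i_m, \sigma_m)$ (containing no query) such that folding the update operations of $\rho$ over the empty database according to the semantics $\denot{U}$ produces $\instance$ on $P$ and $\instance'$ on $P'$. Intuitively, $\Phi$ collects precisely those pairs of states that the two applications can simultaneously be in after consuming identical input prefixes, which is the canonical choice in any relative-completeness argument.

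The first step is the base case of Definition~\ref{def:inductive}: the empty prefix $\rho = \epsilon$ witnesses that the pair of empty databases satisfies $\Phi$, discharging condition (a). The second step is preservation. Assuming $(\instance, \instance') \models \Phi$ with witness $\rho$, I would take an arbitrary corresponding update pair $U_i$, $U_i'$ run under a shared valuation $\sigma$ (this is exactly the hypothesis $\vec{x} = \vec{y}$). Since $P$ and $P'$ share no variables, $U_i$ touches only the relations of $\instance$ and $U_i'$ only those of $\instance'$, so the joint post-state is $(\denot{U_i}_{\sigma, \instance},\ \denot{U_i'}_{\sigma, \instance'})$; extending the witness to $\rho; (i, \sigma)$ shows that this post-state again lies in $\Phi$. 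This establishes validity of the Hoare triple $\{\Phi \land \vec{x} = \vec{y}\}\ U_i \ || \ U_i'\ \{\Phi\}$, which the oracle can certify, giving condition (b).

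The third step is sufficiency (Definition~\ref{def:sufficiency}). Given a query pair $Q_i$, $Q_i'$ and a pair $(\instance, \instance') \models \Phi$ with $\vec{x} = \vec{y} = \sigma$, I would take the update witness $\rho$ for $(\instance, \instance')$ and append the query, forming the full invocation sequence $\omega = \rho; (i, \sigma)$. By the hypothesis $P \simeq P'$ (Definition~\ref{def:equiv}), $\denot{P}_\omega = \denot{P'}_\omega$; unfolding the top-level rules of Figure~\ref{fig:semantics} rewrites both sides as the evaluations of $Q_i$ on $\instance$ and of $Q_i'$ on $\instance'$ under $\sigma$, yielding $Q_i = Q_i'$ and hence the required entailment.

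I expect the main obstacle to be \emph{expressibility}. The construction above produces $\Phi$ as a semantic set of instance pairs, whereas the definitions ultimately intend $\Phi$ to be a formula in the assertion language over which the oracle reasons. As in Cook's classical relative-completeness argument, the proof must argue that the reachability relation is expressible in the underlying theory (or, equivalently, absorb this burden into the oracle), which is exactly why the guarantee is only \emph{relative}. A secondary technical point is the mismatch between the equality in $\denot{P}_\omega = \denot{P'}_\omega$, which compares lists of values with attribute names erased, and the relation equality $Q_i = Q_i'$ demanded by sufficiency; I would reconcile these by appealing to the observation in Section~\ref{sec:semantics} that the program semantics deliberately disregards attribute names, so the two notions coincide on query outputs.
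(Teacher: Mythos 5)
Your proposal is correct, but it takes a genuinely different route from the paper. The paper never constructs the invariant explicitly: it reduces $P \simeq P'$ to the safety of a nondeterministic looping program (Figure~\ref{fig:complete}) that repeatedly picks a random update pair and valuation and ends with a random query pair, then invokes relative completeness of Hoare logic to obtain an inductive loop invariant $I$, and finally shows that the three Hoare triples $\hoare{true}{S_0}{I}$, $\hoare{I}{S_1}{I}$, $\hoare{I}{S_2}{R=R'}$ are exactly conditions (a), (b) of Definition~\ref{def:inductive} and sufficiency. You instead build the invariant directly as the reachability relation --- pairs of instances produced by a common update-only prefix from empty databases --- and verify base case, preservation, and sufficiency semantically, with $P \simeq P'$ invoked only in the sufficiency step. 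The two proofs are closely related (the loop invariant that witnesses Hoare-logic completeness is classically this same strongest/reachability invariant), but your version is more self-contained and has the merit of surfacing exactly where ``relative'' enters: the reachability relation is a semantic set of instance pairs, and one must either show it expressible in the assertion language or absorb that burden into the oracle, precisely as in Cook's argument. The paper's version buys the opposite convenience: by routing through the auxiliary imperative program it can cite relative completeness of Hoare logic as a black box and never confront expressibility explicitly, at the cost of introducing the product-style program and its encoding. Your reconciliation of the representational mismatch (the semantics erases attribute names via \emph{vals}, while sufficiency compares $\tra$ terms, which also use positional attributes via $\attnum$) is the right observation and is needed in either proof, though the paper glosses over it.
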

\vspace{0.05in}

%\begin{proof}[Proof Sketch]
\textsc{Proof Sketch.}
Recall that we have $P \simeq P'$ iff $\denot{P}_{\omega} = \denot{P'}_{\omega}$ for an \emph{arbitrary} invocation sequence $\omega$. Because any valid invocation sequence starts with arbitrarily many update transactions, followed by a single query, we have $\denot{P}_{\omega} = \denot{P'}_{\omega}$ iff the imperative program shown in Figure~\ref{fig:complete} is safe. Essentially, the  program in Figure~\ref{fig:complete} picks a random number of update transactions (together with a randomly chosen valuation), followed by a single, but arbitrary query transaction (and its corresponding randomly chosen valuation).

\begin{wrapfigure}{r}{0.49\textwidth}
\begin{minipage}{0.49\textwidth}
\[
\boxed{
\begin{array}{l}
\textsc{IsEquivalent}(P, P') \ \{
\\%
\left .
\begin{array}{l}
\quad {\rm assume}(P = (S, \vec{T}_U, \vec{T}_Q));  \\
\quad {\rm assume}(P' = (S', \vec{T}'_U, \vec{T}'_Q)); \\
\quad \instance = \emptydb; \ \ \instance' = \emptydb;
\end{array}
\quad \right ]  S_0
\\%
\ \ \ \quad {\rm while}(*) \{ \\
\left .
\begin{array}{l}
\quad \quad \quad i := {\rm randInt}(1, |\vec{T_U}|); \\
\quad \quad \quad \sigma := {\rm randValuation}(U_i); \\
\quad \quad \quad \instance := U_i(\sigma, \instance); \\
\quad \quad \quad \instance' := U_i'(\sigma, \instance'); \\
\end{array}
\right ] S_1 \\
\ \ \  \quad \}
\\%
\left .
\begin{array}{l}
\  \quad n := {\rm randInt}(1, |\vec{T_Q}|); \\
\  \quad \sigma := {\rm randValuation}(Q_n); \\
\  \quad R := Q_n(\sigma, \instance); \\
\  \quad R' := Q'_n(\sigma, \instance'); \\
\end{array}
\ \ \quad \right ] S_2 \\
\ \ \ \quad {\rm assert}(R = R'); \\
\}
\end{array}
}
\]
\end{minipage}
\caption{Program that is safe iff $P,P'$ are equivalent.}
\vspace{0.05in}
\label{fig:complete}
\end{wrapfigure}

If the assertion of program in Figure~\ref{fig:complete} is valid, the Hoare triple
\[
    \hspace{-2.7in}
    \hoare{true}{S_0; ~ while(*) S_1; ~ S_2}{R = R'}
\]
must be provable (by relative completeness of Hoare logic, and assuming an oracle for proving Hoare triples that involve database transactions). Again, using relative completeness of Hoare logic, this means the following three Hoare triples must be valid, where $I$ is an inductive invariant of the while loop:

\begin{enumerate}
    \item $\hoare{true}{S_0}{I}$
    \item $\hoare{I}{S_1}{I}$
    \item $\hoare{I}{S_2}{R = R'}$
\end{enumerate}

(1) implies that $I$ is satisfied by a pair of empty databases, since $S_0$ just initializes $\instance$ and $\instance'$ to be empty.
(2) implies Hoare triple $\hoare{I \land \vec{x}=\vec{y}}{U_i;U'_i}{I}$ is valid since the loop body $S_1$ executes $U_i$ and $U_i'$ on the same valuation and $\vec{x}, \vec{y}$ refer to the parameters of $U_i, U_i'$, respectively.
(3) implies Hoare triple $\hoare{I \land \vec{x}=\vec{y}}{Q_i; Q_i'}{R = R'}$ is valid since we again execute $Q_i, Q_i'$ on the same valuation. Thus, we have $I \land \vec{x}=\vec{y} \models Q_i = Q_i'$ because $R, R'$ are the return values of $Q_i, Q_i'$. Therefore, we have shown that $I$ is a sufficient, inductive bisimulation invariant of $P$ and $P'$.
\hfill $\qed$
%\end{proof}

\subsection{Proving Refinement}

Since our notion of refinement is a generalization of equivalence (recall Definition~\ref{def:ref}), our proof methodology for showing program refinement closely follows that for verifying equivalence. In particular, rather than finding a one-to-one mapping between database states as in the case of equivalence, it suffices to find a one-to-many mapping for showing refinement. Hence, our proof methodology relies on finding a \emph{simulation invariant} rather than a stronger bisimulation invariant:

\begin{definition} \textbf{(Inductive simulation invariant)}\label{def:ind-sim}
Consider programs $P = (\S, \M, \Q)$ and $P' = (\S', \M', \Q')$ and suppose that $P, P'$ contain a disjoint set of variables. A \emph{simulation invariant} $\Phi$ is said to be \emph{inductive} with respect to programs $P$ and $P'$ if (a) $\Phi$ is satisfied by a pair of empty databases, and (b) the following Hoare triple is valid for  all $\lamvec{x}{U_i} \in \M$ and $\lamvec{y}{U_i'} \in \M'$ where $i \in [1, |\M|]$:
\[
\Big\{ \Phi \land \bigwedge_{x_j \in \vec{x}} x_j= y_j\Big\} \ \ U_i \ ; \ U'_i \ \ \Big \{\Phi\Big\}
\]

\end{definition}

Recall from Definition~\ref{def:ref} that we allow the second program $P'$ to contain more transactions than $P$, but the notion of refinement only talks about invocation sequences that use shared transactions from $P$ and $P'$. Therefore, in Definition~\ref{def:ind-sim}, we only require $\Phi$ to be preserved by pairs of update transactions that are both present in $P$ and $P'$. Furthermore, since transactions in $P'$ can take additional arguments not present in their counterparts in $P$, our precondition states that the arguments are pairwise equal for only the ``shared'' variables.

As in the equivalence scenario, finding an inductive simulation invariant $\Phi$ between $P$ and $P'$ is \emph{not} sufficient for proving that $P'$ refines $P$, as $\Phi$ may not be strong enough to show refinement. Hence, we also need to define what it means for an inductive simulation invariant to be \emph{sufficient} for showing refinement. However, since the notion of refinement is weaker than equivalence, we also weaken our corresponding notion of sufficiency as follows:

\begin{definition} \textbf{(Projective sufficiency)}
A formula $\Phi$ is said to be \emph{projectively sufficient} with respect to programs $P = (\S, \M, \Q)$ and $P' = (\S', \M', \Q')$ if  for all $\lamvec{x}{Q_i} \in \Q$ and $\lamvec{y}{Q'_i} \in \Q'$ where $i \in [1, |\Q|]$, we have:
\[ \Big ( \Phi \land \bigwedge_{x_j \in \vec{x}} x_j = y_j \Big ) \models   \exists L. ~ Q_i = \Pi_L(Q_i') \]
\end{definition}

Observe that the notion of projective sufficiency is weaker than Definition~\ref{def:sufficiency}, as we do not require $Q_i$ and $Q_i'$ to yield exactly the same relation and allow the result of $Q_i'$ to contain attributes not present in the result of $Q_i$. Our general proof methodology for proving refinement then relies on finding a simulation invariant that is both inductive and projectively sufficient.

\begin{theorem} \textbf{(Soundness)}
Given database applications $P, P'$, the existence of a projectively sufficient and inductive simulation invariant $\Phi$ implies $P' \preceq P$.
\end{theorem}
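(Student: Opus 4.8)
The plan is to mirror the proof of Theorem~\ref{thm:soundness} (soundness for equivalence), adapting it to the weaker refinement statement. Concretely, I would fix an arbitrary pair of invocation sequences $\omega', \omega$ with $\omega' \preceq \omega$ and prove, by induction on the length of $\omega$ (which equals the length of $\omega'$, since they share the same index sequence), the strengthened claim: for all database instances $\instance, \instance'$ such that $\instance \cup \instance'$ models $\Phi$, there exists an attribute list $L$ with $\denot{\Pi_L(P')}_{\omega', \instance'} = \denot{P}_{\omega, \instance}$. Instantiating this with $\instance = \instance' = \emptyset$ --- which satisfies $\Phi$ by clause (a) of the inductive simulation invariant --- together with $\denot{P}_\omega = \denot{P}_{\omega, \emptyset}$ then yields exactly the refinement condition of Definition~\ref{def:ref}.

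For the base case $\omega = (n, \sigma)$ and $\omega' = (n, \sigma')$, the shared transaction is a query. The key observation is that $\sigma' \preceq \sigma$ forces $\sigma$ and $\sigma'$ to agree on the parameters common to $Q_n$ and $Q_n'$, so the precondition $\bigwedge_{x_j \in \vec{x}} x_j = y_j$ holds; combined with the hypothesis that $\instance \cup \instance'$ models $\Phi$, projective sufficiency supplies a witness $L$ with $Q_n = \Pi_L(Q_n')$ under this model. This is precisely the desired equation once both sides are unfolded through the query denotation, recalling that the program denotation discards attribute names, so the comparison is ultimately between lists of value-lists.

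For the inductive step $\omega = (n,\sigma);\omega_1$ and $\omega' = (n,\sigma');\omega_1'$, the shared head transaction is an update $U_n, U_n'$. Again $\sigma' \preceq \sigma$ yields $\bigwedge_{x_j \in \vec{x}} x_j = y_j$, so the Hoare triple of Definition~\ref{def:ind-sim} applies and establishes that the successor instances $\instance_1 = \denot{U_n}_{\sigma,\instance}$ and $\instance_1' = \denot{U_n'}_{\sigma',\instance'}$ satisfy $\Phi$. Since the projection $\Pi_L$ rewrites only queries and leaves every update transaction untouched, unfolding one step of the program denotation reduces $\denot{P}_{\omega,\instance}$ to $\denot{P}_{\omega_1,\instance_1}$ and $\denot{\Pi_L(P')}_{\omega',\instance'}$ to $\denot{\Pi_L(P')}_{\omega_1',\instance_1'}$, at which point the induction hypothesis (applied to $\omega_1' \preceq \omega_1$ and $\instance_1,\instance_1'$) supplies the required $L$.

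The main obstacle I anticipate is not the inductive bookkeeping but correctly threading the existentially quantified $L$: in principle $\Pi_L(P')$ fixes a single projection applied uniformly to every query in $\Q'$, yet projective sufficiency produces a witness pointwise, for whichever query happens to terminate the sequence. This is reconciled by the fact that each invocation sequence ends in exactly one query, so only that final query is ever projected along a given run, while the updates in the prefix are invariant under $\Pi_L$; this is why the $L$ obtained in the base case propagates unchanged back through the inductive step. I would make this explicit when unfolding the denotation of $\Pi_L(P')$, to forestall any appearance of needing one uniform $L$ across distinct queries.
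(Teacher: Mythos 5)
Your proposal is correct and takes essentially the approach the paper intends: the paper omits this proof, saying only that it is ``very similar'' to that of Theorem~\ref{thm:soundness}, and your induction on the length of the invocation sequence --- with the strengthened hypothesis over $\Phi$-related instance pairs, projective sufficiency in the base case, and the simulation Hoare triple (enabled by valuation refinement on shared parameters) in the inductive step --- is precisely that adaptation. Your explicit handling of the existentially quantified $L$ (only the single terminal query is ever projected, so the pointwise witness from projective sufficiency suffices and propagates unchanged through the update prefix) correctly resolves the one point where the refinement argument genuinely departs from the equivalence one.
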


\begin{theorem} \textbf{(Relative Completeness)}
Suppose we have an oracle for proving any valid Hoare triple and logical entailment.
If $P' \preceq P$, then there always exists a projectively sufficient and inductive simulation invariant $\Phi$ for programs $P, P'$.
\end{theorem}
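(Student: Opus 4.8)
The proof closely parallels that of Theorem~\ref{thm:completeness}, adapting the construction to the weaker notion of refinement. The plan is to build a nondeterministic imperative program $\textsc{IsRefinement}(P, P')$, analogous to the one in Figure~\ref{fig:complete}, whose safety is equivalent to $P' \preceq P$, and then to extract the desired simulation invariant as a loop invariant guaranteed by the relative completeness of Hoare logic.

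First I would construct the program. By Definition~\ref{def:ref}, $P' \preceq P$ quantifies only over pairs $\omega' \preceq \omega$, so $\omega$ and $\omega'$ share the same index sequence drawn from the transactions common to $P$ and $P'$, and the valuations satisfy $\sigma'_k \preceq \sigma_k$. Accordingly, the loop in $\textsc{IsRefinement}$ ranges $i$ only over $[1, |\M|]$ (the shared update transactions), picks a valuation $\sigma$ for $U_i$ together with a refining valuation $\sigma'$ for $U_i'$ that agrees with $\sigma$ on the shared parameters but assigns the extra parameters of $U_i'$ arbitrarily, and then performs $\instance := \denot{U_i}_{\sigma, \instance}$ and $\instance' := \denot{U_i'}_{\sigma', \instance'}$. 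After the loop, it picks a shared query index $n \in [1, |\Q|]$ with corresponding refining valuations, computes $R = \denot{Q_n}_{\sigma, \instance}$ and $R' = \denot{Q_n'}_{\sigma', \instance'}$, and asserts $\exists L.\ R = \Pi_L(R')$. Since every valid invocation sequence consists of arbitrarily many shared updates followed by a single shared query, and since $\Pi_L(P')$ applies $\Pi_L$ to exactly the executed query, this program enumerates precisely the pairs $(\omega, \omega')$ with $\omega' \preceq \omega$; hence it is safe iff $P' \preceq P$.

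Next I would invoke relative completeness of Hoare logic (using the assumed oracle for Hoare triples and entailment) exactly as in the proof of Theorem~\ref{thm:completeness}. Safety of $\textsc{IsRefinement}$ yields a provable triple for the straight-line-plus-loop body, which produces a loop invariant $I$ satisfying $\hoare{true}{S_0}{I}$, $\hoare{I}{S_1}{I}$, and $\hoare{I}{S_2}{\exists L.\ R = \Pi_L(R')}$. The first triple forces $I$ to hold for a pair of empty databases. The second, since $S_1$ runs $U_i$ and $U_i'$ under valuations agreeing on the shared parameters, gives exactly the Hoare triple of Definition~\ref{def:ind-sim}, so that $I$ is an inductive simulation invariant. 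The third, since $R, R'$ are the return values of $Q_n, Q_n'$ evaluated under such valuations, yields $I \land \bigwedge_{x_j \in \vec{x}} x_j = y_j \models \exists L.\ Q_i = \Pi_L(Q_i')$, which is precisely projective sufficiency. Thus $I$ is a projectively sufficient, inductive simulation invariant for $P, P'$.

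The main obstacle I anticipate is making the program construction faithfully capture \emph{input refinement} rather than plain equality of runs. Two points require care: encoding the refining valuations so that the extra parameters of $P'$ are quantified correctly (universally in the safety statement, and hence surfacing only through the $\bigwedge_{x_j \in \vec{x}} x_j = y_j$ constraint in the resulting triples), and handling the existential projection $\exists L$ in the final assertion so that the entailment extracted from $\hoare{I}{S_2}{\cdot}$ matches the $\exists L$ appearing in the definition of projective sufficiency. Once the assertion language is rich enough to express $\exists L.\ R = \Pi_L(R')$ — which is subsumed by the oracle assumption — the remaining reasoning is a routine specialization of the equivalence argument.
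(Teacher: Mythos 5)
Your proposal is correct and takes essentially the same approach the paper intends: the paper gives no separate proof for this theorem, remarking only that it is very similar to the proof of Theorem~\ref{thm:completeness}, and your argument is precisely that adaptation (loop restricted to the shared update transactions, refining valuations constraining only the shared parameters, and the final assertion weakened to $\exists L.~R = \Pi_L(R')$). The two subtleties you flag are handled exactly as you suggest, so there is no gap.
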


The proofs of these theorems are very similar to those of Theorems~\ref{thm:soundness} and~\ref{thm:completeness} for  equivalence.

\section{SMT Encoding of Relational Algebra with Updates}\label{sec:logic}
In the previous section, we defined what it means for simulation and bisimulation invariants to be inductive, but we have not fixed a logical theory over which we express these invariants. In this section, we discuss the theory of relational algebra with updates, $\tra$, and show how to enable reasoning in $\tra$ using existing SMT solvers.

\begin{figure}
\centering
\[
\begin{array}{r c l}
    \emph{Formula} \ \ \ F &:=& \emph{true} \termdelim \emph{false} \termdelim  t = t \termdelim  F \land F \termdelim F \lor F \termdelim \neg F \termdelim F \rightarrow F \termdelim \exists x. F \termdelim \forall x. F \\
    \emph{Term} \ \ \ t &:=& x \termdelim T \termdelim \Pi_{a_1,\ldots,a_n}(t) \termdelim \sigma_{\phi}(t) \termdelim t \times t \termdelim t \cup t \termdelim t - t \termdelim t \awr{a_i}{v} \\
    \emph{Predicate} \ \ \ \phi &:=& a_i \odot a_j \termdelim a_i \odot v \termdelim a_i \in t \termdelim \phi \land \phi \termdelim \phi \lor \phi \termdelim \neg \phi \\
    \emph{BinOp} \ \ \ \odot &:=& \leq \termdelim < \termdelim = \termdelim \neq \termdelim > \termdelim \geq \\
\end{array}
\]
\vspace{-0.05in}
\[
\begin{array}{r c l c r c l}
    T & \in & \emph{Table} & \ \ \ \ \ \ \ \ \ & a_i & \in & \emph{Attribute} \\
    v & \in & \emph{Constant} \cup \emph{Variable} & & x & \in & \emph{Variable} \\
\end{array}
\]
\caption{Formula in Theory of Relational Algebra with Updates. The notation $a_i$ represents the $i$'th attribute in a relation.}
\label{fig:theory}
\end{figure}

Figure~\ref{fig:theory} gives the syntax of the theory of relational algebra with updates, $\tra$, which we use to express simulation and bisimulation invariants . Atomic formulas  in $\tra$ are of the form $t = t$ where $t$ is a term representing a relation. Basic terms include variables $x$ and concrete tables $T$, and more complex terms can be formed using the relational algebra operators $\Pi$ (projection), $\sigma$ (selection), $\times$ (Cartesian product),   $\cup$ (union), and $-$ (difference). In addition to these standard relational algebra operators, $\tra$ also includes an \emph{update} operator, denoted as $t\awr{a_i}{v}$, which represents the new relation after changing the  $i$'th attribute of all tuples in $t$ to $v$. Observe that the  theta join operator $\Join_\phi$ is expressible in this logic as $\sigma_\phi(t_1 \times t_2)$. We also write $t_1 \Join t_2$ as syntactic sugar for $\sigma_{\phi}(t_1 \times t_2)$   where $\phi$ is a predicate stating that {the shared attributes of $t_1$ and $t_2$ are equal}.

\begin{figure}
\centering

\begin{subfigure}[b]{.49\linewidth}
\centering
\boxed{\emph{get \ $\texttt{get}(i,l)$}}
\[
    \irule{l = h :: t \quad i = 0}{\texttt{get}(i,l) = h} \ruledelim
    \irule{l = h :: t \quad i \neq 0}{\texttt{get}(i,l) = \texttt{get}(i-1,t)}
\]

\vspace{0.10in}

\boxed{\emph{projection $\Pi_l(t)$}}
\[
    \irule{t = [~]}{\Pi_l(t) = [~]} \ruledelim
    \irule{t = h :: t_1}{\Pi_l(t) = \Pi'_l(h) :: \Pi_l(t_1)}
\]
\[
    \irule{l = [~]}{\Pi'_l(h) = [~]} \ruledelim
    \irule{l = a_i :: l_1}{\Pi'_l(h) = \texttt{get}(i,h) :: \Pi'_{l_1}(h)}
\]

\vspace{0.10in}

\boxed{\emph{union $t_1 \cup t_2$}}
\[
    \irule{t_1 = [~]}{t_1 \cup t_2 = t_2} \ruledelim
    \irule{t_1 = h :: t}{t_1 \cup t_2 = h :: (t \cup t_2)}
\]

\vspace{0.10in}

\boxed{\emph{minus $t_1 - t_2$}}
\[
    \irule{t_2 = [~]}{t_1 - t_2 = t_1} \ruledelim
    \irule{t_2 = h_2 :: t}{t_1 - t_2 = (t_1 -' h_2) - t}
\]
\[
    \irule{t_1 = [~]}{t_1 -' h_2 = [~]} \ruledelim
    \irule{t_1 = h_1 :: t_3 \quad h_1 = h_2}{t_1 -' h_2 = t_3}
\]
\[
    \irule{t_1 = h_1 :: t_3 \quad h_1 \neq h_2}{t_1 -' h_2 = h_1 :: (t_3 -' h_2)}
\]
\end{subfigure}
~
\begin{subfigure}[b]{.49\linewidth}
\centering
\boxed{\emph{selection $\sigma_\phi(t)$}}
\[
    \irule{t = [~]}{\sigma_\phi(t) = [~]} \ruledelim
    \irule{\phi(h) \quad t = h :: t_1}{\sigma_\phi(t) = h :: \sigma_\phi(t_1)}
\]
\[
    \irule{\neg \phi(h) \quad t = h :: t_1}{\sigma_\phi(t) = \sigma_\phi(t_1)}
\]

\vspace{0.10in}

\boxed{\emph{Cartesian product $t_1 \times t_2$}}
\[
    \irule{t_1 = [~]}{t_1 \times t_2 = [~]} \ruledelim
    \irule{t_1 = h_1 :: t}{t_1 \times t_2 = (h_1 \times' t_2) \cup (t \times t_2)}
\]
\[
    \irule{t_2 = [~]}{h_1 \times' t_2 = [~]} \ruledelim
    \irule{t_2 = h_2 :: t_3}{h_1 \times' t_2 = \texttt{cat}(h_1, h_2) :: h_1 \times' t_3}
\]
\[
    \irule{h_1 = [~]}{\texttt{cat}(h_1, h_2) = h_2} \ruledelim
    \irule{h_1 = c :: h}{\texttt{cat}(h_1, h_2) = c :: \texttt{cat}(h,h_2)}
\]

\vspace{0.10in}

\boxed{\emph{update $t \awr{a_i}{v}$}}
\[
    \irule{t = [~]}{t \awr{a_i}{v} = [~]} \ \ \
    \irule{t = h :: t_1}{t \awr{a_i}{v} = \texttt{upd}(h,i,v) :: t_1 \awr{a_i}{v}}
\]
\[
    \irule{h = [~]}{\texttt{upd}(h,i,v) = [~]} \ruledelim
    \irule{h = c :: h_1 \quad i = 0}{\texttt{upd}(h,i,v) = v :: h_1}
\]
\[
    \irule{h = c :: h_1 \quad i \neq 0}{\texttt{upd}(h,i,v) = c :: \texttt{upd}(h_1, i-1, v)} \ruledelim
\]
\end{subfigure}

\caption{Axioms in Theory of Relation Algebra with Updates. $[~]$ represents an empty list \texttt{nil}. The binary operator $::$ denotes the list constructor \texttt{cons}. $\times'$ and $\texttt{cat}$ are auxiliary functions for axiomatizing Cartesian product. $\Pi'$, $-'$, and $\texttt{upd}$ are auxiliary functions for axiomatizing projection, minus, and update, respectively.}
\label{fig:axiom}
\end{figure}

\begin{figure}
\centering
\[
\begin{array}{rcl}
(a_i \odot a_j)(h) &=& \texttt{get}(i, h) \odot \texttt{get}(j, h) \\
(a_i \odot v)(h) &=& \texttt{get}(i, h) \odot v \\
(a_i \in t)(h) &=& \exists j. ~ \texttt{get}(0, \texttt{get}(j, t)) = \texttt{get}(i, h) \\
(\phi_1 \land \phi_2)(h) &=& \phi_1(h) \land \phi_2(h) \\
(\phi_1 \lor \phi_2)(h) &=& \phi_1(h) \lor \phi_2(h) \\
(\neg \phi)(h) &=& \neg \phi(h) \\
\end{array}
\]
\caption{Auxilary functions for selection axiom schema $\phi(h)$}
\label{fig:axiom-aux}
\end{figure}

Since we view tables as lists of tuples, we axiomatize  $\tra$ using the theory of lists~\cite{smtlib}. Our axiomatization is presented in  Figure~\ref{fig:axiom} in the form of inference rules, where we view tuples as \emph{lists} of values and  relations as \emph{lists} of tuples. An attribute $a_i$ of a tuple is simply an index into the list representing that tuple.
For example, consider the axioms for \emph{projection} $\Pi_l(t)$, which projects term $t$ given attribute list $l$. We first define an auxiliary function $\Pi'_l(h)$ that projects a single tuple $h$ given $l$. In particular, if the attribute list $l$ is empty, then $\Pi'_l(h)$ yields $\nil$. Otherwise, if $l$ consists of head $a_i$ and tail $l_1$, $\Pi'_l(h)$ composes the $i$'th value of $h$ (i.e., $\texttt{get}(i, h)$) and the projection of tail $\Pi'_{l_1}(h)$. Similarly, $\Pi_l(t)$ is also recursively defined. If $t = \nil$, then $\Pi_l(t) = \nil$. Otherwise if $t = h :: t_1$, then $\Pi_l(t)$ composes the projection $\Pi'_l(h)$ of head $h$ and the projection $\Pi_l(t_1)$ of tail $t_1$. Also,
please observe that the inference rules for selection in Figure~\ref{fig:axiom} actually correspond to \emph{axiom schemata} rather than axioms: Because the selection operator is parameterized over a predicate $\phi$, this schema needs to be instantiated for each predicate that occurs in the formula.

\begin{example}
Consider the formula $\sigma_{a_1 \geq 2}(x) = \sigma_{a_2 > 1}(y)$ in the theory of relational algebra with updates. We generate the following axioms using the axiom schemata for selection:
\[
\begin{array}{ll}
(1a) & \forall x.  \ (x = [ \ ]) \rightarrow \sigma_{a_1 \geq 2}(x) = [ \ ] \\
(1b) & \forall x, h, t. \ {\Big (} x=h::t \rightarrow  \Big (
\begin{array}{l}  (\emph{get}(1, h) \geq 2 \rightarrow \sigma_{a_1 \geq 2}(x)= h::\sigma_{a_1 \geq 2}(t)) \land \\ (\neg (\emph{get}(1, h) \geq 2) \rightarrow \sigma_{a_1 \geq 2}(x)= \sigma_{a_1 \geq 2}(t))
\end{array} \Big ) {\Big )}
\\
(2a) & \forall y.  \ (y = [ \ ]) \rightarrow \sigma_{a_2 > 1}(y) = [ \ ] \\
(2b) & \forall y, h, t. \ {\Big (} y=h::t \rightarrow  \Big (
\begin{array}{l}  (\emph{get}(2, h) > 1 \rightarrow \sigma_{a_2 > 1}(y)= h::\sigma_{a_2 > 1}(t)) \land \\ (\neg (\emph{get}(2, h) > 1) \rightarrow \sigma_{a_2 > 1}(y)= \sigma_{a_2 > 1}(t))
\end{array} \Big ) {\Big )}

\end{array}
\]
\end{example}

\vspace{0.1in} \noindent
{\bf \emph{Remark.}} Since the problem of checking equivalence between a pair of relational algebra expressions is known to be undecidable~\cite{undecidable}, our theory of relational algebra with updates is also undecidable. However, with the aid of some optimizations that we discuss in Section~\ref{sec:impl}, we are able to determine the validity of most $\tra$ formulas that we encounter in practice.

\section{Automated Verification}

So far, we have explained our general proof methodology and introduced a first-order theory in which we will express our  bisimulation invariants. However, we have not yet explained how to \emph{automatically} prove equivalence between programs. In this section, we discuss our strategy for proof automation. Specifically, we first discuss how to automatically prove equivalence assuming that an oracle provides bisimulation invariants (Section~\ref{sec:sp}), and then explain how we infer them automatically (Section~\ref{sec:inv}). Because the automation of refinement checking is very similar, this section only addresses equivalence.

\subsection{Automation for Bisimulation Invariant Inductiveness}\label{sec:sp}

Consider two database-driven programs $P = (\S, \M, \Q)$ and $P' = (\S', \M', \Q')$, and suppose that an oracle provides a bisimulation invariant $\Phi$ between $P$ and $P'$. Based on the proof methodology we outlined in Section~\ref{sec:methodology}, we can prove that $P, P'$ are equivalent by showing that $\Phi$ satisfies the following conditions for any pair of updates $\lambda \vec{x}. U_i$,  $\lambda \vec{y}. U_i'$ and any pair of queries $\lambda \vec{x}. Q_i$,  $\lambda \vec{y}. Q_i'$:

\[
\begin{array}{lll}
(1) &  \ \Phi \land \vec{x} = \vec{y} \models Q_i = Q_i'&  \emph{(Sufficiency)}  \\
(2) & \{ \Phi \land \vec{x} = \vec{y} \} \ U_i; U_i' \ \{ \Phi \} & \emph{(Inductiveness)}
\end{array}
\]

The first condition (sufficiency) is easy to prove since we have already defined a logical theory that allows us to write terms of the form $Q_i = Q_i'$. The only small technical hiccup is that $\tra$ uses the syntax $a_i$ to denote the $i$'th attribute in a relation, whereas attributes in the queries $Q_i, Q_i'$ are \emph{names} of attributes. To solve this difficulty, we assume a function $\attnum$ which replaces attribute names $s$ in constructs from Figure~\ref{fig:ir} with  $a_i$, where $i$ is the index of $s$. Thus, we can check whether formula $\Phi$ satisfies condition (1) by querying whether the following formula is valid modulo $\tra$:
\[
(\Phi \land \vec{x} = \vec{y}) \rightarrow \attnum(Q_i) = \attnum(Q_i')
\]

However, to prove the second condition (i.e., inductiveness),  we need a way to prove Hoare triples for update statements $U$ from Figure~\ref{fig:ir}. Towards this goal, we define a \emph{strongest post-condition} semantics for update statements. Given a formula $\Phi$ over $\tra$ and an update statement $U$, Figure~\ref{fig:sp} describes the computation of $\emph{sp}(\Phi, U)$, which represents the strongest post-condition of $\Phi$ with respect to statement $U$.

\begin{figure}
\centering
\[
\begin{array}{rcl}
\emph{sp}(\Phi, \ins(R, \{ a_1: v_1, \ldots, a_n: v_n\})) &=& \exists x. ~ (R = x \cup [r]) \land \Phi[x/R] \ \emph{where} \ r = [v_1, \ldots, v_n]\\
\emph{sp}(\Phi, \del(R, \phi)) &=& \exists x. ~ (R = \sigma_{\attnum(\neg \phi)}(x)) \land \Phi[x/R] \\
\emph{sp}(\Phi, \upd(R, \phi, a, v)) &=& \exists x. ~ (R = \sigma_{\attnum(\neg \phi)}(x) \cup \sigma_{\attnum(\phi)}(x)\awr{\attnum(a)}{v}) \land \Phi[x/R] \\
\emph{sp}(\Phi, U_1; U_2) &=& \emph{sp}(\emph{sp}(\Phi, U_1), U_2) \\
\end{array}
\]
\caption{Strongest Postcondition for Update Transactions}
\label{fig:sp}
\end{figure}

To compute the strongest postcondition of $\Phi$ with respect to $\ins(R, \{ a_1: v_1, \ldots, a_n: v_n\})$, we think of the insertion as the assignment $R := \emph{append}(R, [r ])$ where $r$ is the tuple (list) $[v_1, \ldots, v_n]$.  Since the union operator $\cup$ in $\tra$ corresponds to list concatenation, the new value of $R$ after the assignment is given by $x \cup [r]$, where the existentially quantified variable $x$ represents the old value of $R$.

To understand the strongest postcondition semantics of deletion, recall that $\del(R, \phi)$ removes all rows in $R$ that satisfy $\phi$. Hence, we can model this statement using the assignment $R:= \sigma_{\neg \phi}(R)$. Thus, when we compute the strongest postcondition of $\phi$ with respect to $\del(R, \phi)$, the new value of $R$ is given by $\sigma_{\attnum(\neg \phi)}(x)$ where the existentially quantified variable $x$ again represents the old value of $R$ and $\attnum(\phi)$ replaces attribute names in $\phi$ with their corresponding indices.

Finally, let us consider the strongest postcondition for update statements of the form $\upd(R, \phi, a, v)$. Recall that this statement assigns value $v$ to the $a$ attribute of all tuples in $R$ that satisfy $\phi$. Specifically, according to the denotational semantics from Figure~\ref{fig:semantics}, we can model $\upd(R, \phi, a, v)$ using the assignment statement:
\[
R \ := \ (\sigma_{\neg \phi}(R)) \cup (\sigma_\phi(R))\awr{a}{v}
\]
Hence, we obtain the strongest postcondition of $\Phi$ with respect to $\upd(R, \phi, a, v)$ by computing the strongest postcondition of the above assignment, where the right-hand side is a term in $\tra$ (modulo changing attribute names to indices).

\begin{definition}{\bf (Agreement)}
Consider database instance $\instance$ and valuation $\sigma$, and let $\attnum(\instance)$ denote the representation of $\instance$ where each tuple $\{ a_1: v_1, \ldots, a_n: v_n \}$ is represented as the list $[v_1, \ldots, v_n]$. We say that $(\instance, \sigma)$ agrees with $\tra$-formula $\Phi$, written $(\instance, \sigma) \sim \Phi $, iff $\attnum(\instance) \uplus \sigma \models \Phi$.
\end{definition}

\begin{theorem}{\bf (Soundness of sp)}\label{thm:sp}
Suppose that $\denot{U}_{\sigma, \instance} = \instance'$, and let $\Phi$ be a $\tra$ formula. If $(\instance, \sigma) \sim \Phi$, then we also have
$(\instance', \sigma) \sim \emph{sp}(\Phi, U)$.
\end{theorem}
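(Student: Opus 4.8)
The plan is to prove the claim by structural induction on the update statement $U$, following exactly the four clauses that define $\emph{sp}$ in Figure~\ref{fig:sp}. The whole argument rests on one bridging observation: the meta-level list combinators used in the denotational semantics of Figure~\ref{fig:semantics} (\emph{append}, \emph{filter}, \emph{map}) are defined by precisely the same recursive equations as the $\tra$ operators $\cup$, $\sigma_\phi$, and $\cdot\awr{a_i}{v}$ axiomatized in Figure~\ref{fig:axiom}. I would package this as an \emph{adequacy lemma}: for any $\tra$ term $t$ built from the relational operators and any model $M = \attnum(\instance) \uplus \sigma$, the value $t$ denotes in $M$ equals the $\attnum$-encoding of the corresponding list computed by the denotational semantics; in particular $\cup$ computes \emph{append}, $\sigma_{\attnum(\phi)}$ computes \emph{filter} with the encoded predicate, and the update operator computes the \emph{map}-based modification. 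Alongside it I would prove a \emph{predicate-agreement lemma} by induction on $\phi$, stating that $\denot{\phi}_{\sigma,\instance,t}$ and the selection schema $\attnum(\phi)(h)$ of Figure~\ref{fig:axiom-aux} have the same truth value whenever $h$ is the encoding of tuple $t$ and $M$ agrees with $(\instance,\sigma)$.

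Granting these lemmas, each case of the induction shares a common shape. Since $\instance' = \denot{U}_{\sigma,\instance}$ modifies only the single relation $R$ touched by $U$, the encodings $\attnum(\instance)$ and $\attnum(\instance')$ agree on every table other than $R$. To discharge the existential in $\emph{sp}(\Phi,U) = \exists x.\,(R = e) \land \Phi[x/R]$, I would instantiate the witness $x$ with the \emph{old} value $\attnum(\instance)(R)$. The conjunct $\Phi[x/R]$ then follows from a framing argument: under the extended model $\attnum(\instance') \uplus \sigma \uplus [x \mapsto \attnum(\instance)(R)]$, replacing $R$ by $x$ restores the old contents of $R$ while leaving every other table at its unchanged value, so $\Phi[x/R]$ evaluates exactly as $\Phi$ does under $\attnum(\instance) \uplus \sigma$, which holds by the hypothesis $(\instance,\sigma) \sim \Phi$. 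The conjunct $R = e$ is where the two lemmas are used: for \ins\ the expression $e = x \cup [r]$ matches \emph{append} of the old table with $r[\sigma]$; for \del\ the expression $\sigma_{\attnum(\neg\phi)}(x)$ matches \emph{filter} by $\neg\phi$; and for \upd\ the expression $\sigma_{\attnum(\neg\phi)}(x) \cup \sigma_{\attnum(\phi)}(x)\awr{\attnum(a)}{v}$ matches the two-part \emph{append}/\emph{map} construction of Figure~\ref{fig:semantics} term by term. The sequencing case $U_1;U_2$ is immediate: since $\emph{sp}(\Phi,U_1;U_2) = \emph{sp}(\emph{sp}(\Phi,U_1),U_2)$ and $\denot{U_1;U_2}_{\sigma,\instance} = \denot{U_2}_{\sigma,\instance_1}$ with $\instance_1 = \denot{U_1}_{\sigma,\instance}$, one applies the inductive hypothesis first to $U_1$, obtaining $(\instance_1,\sigma) \sim \emph{sp}(\Phi,U_1)$, and then to $U_2$.

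I expect the main obstacle to be the careful proof of the two bridging lemmas rather than the top-level induction. Getting the adequacy lemma right means tracking the attribute-index bookkeeping introduced by $\attnum$ (the translation from attribute \emph{names} to positional indices $a_i$) all the way through the recursive axioms, and verifying that the auxiliary functions $\Pi'$, $\texttt{upd}$, $\times'$, and $\texttt{cat}$ of Figure~\ref{fig:axiom} really do compute the same lists as the combinator definitions. The most delicate point lies in the predicate-agreement lemma for the membership atom $a \in Q$: in the denotational semantics the subquery $Q$ is evaluated against the \emph{old} instance $\instance$, whereas in the $\tra$ formula the corresponding table term is read off the \emph{current} model. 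These coincide cleanly as long as $Q$ ranges over relations other than the one being modified by $U$ --- which is the case for all realistic update predicates (e.g.\ the deletions and updates in Figure~\ref{fig:ir-ex}) --- and I would state this evaluation-context condition explicitly, since it is exactly the place where the proof would otherwise break.
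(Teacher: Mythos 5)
Your proposal follows essentially the same route as the paper's proof: the paper also proceeds by structural induction on $U$, discharging the \ins/\del/\upd\ base cases exactly as you describe---instantiating the existential witness with the old value of $R$, arguing $\Phi[x/R]$ by framing (since $R$ no longer occurs in it), and matching $\cup$, $\sigma_{\attnum(\phi)}$, and $\awr{\attnum(a)}{v}$ against \emph{append}, \emph{filter}, and \emph{map} via two bridging lemmas (a predicate-agreement lemma proved by induction on $\phi$ and a query-adequacy lemma proved by induction on $Q$) that correspond one-to-one with your two lemmas, and handling sequencing by composing the inductive hypotheses. Your closing remark about the membership atom $a \in Q$ is a genuine subtlety the paper's own proof glosses over---its predicate lemma is stated against the old instance yet invoked in a model built from the updated one---so making that evaluation-context side condition explicit is a sound refinement rather than a deviation.
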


\begin{proof}
See Appendix A.
\end{proof}

Now that we have defined a strongest post-condition semantics for update statements in our language, it is easy to check the correctness of the Hoare triple $\{ \Phi \land \vec{x} = \vec{y} \} \ U_i; U_i' \ \{ \Phi \}$ by simply querying the validity of the following formula modulo $\tra$:
\[
\emph{sp}(\Phi \land \vec{x} =\vec{y}, U_i; U_i') \rightarrow \Phi
\]

\subsection{Bisimulation Invariant Synthesis}\label{sec:inv}

So far, we have discussed how to automate the proof that $\Phi$ is an inductive and sufficient bisimulation invariant. However, since we do not want users to manually provide such bisimulation invariants, our verification algorithm automatically infers them using \emph{monomial predicate abstraction}~\cite{mpa1,mpa2,mpa3}.

\algnewcommand\algorithmicforeach{\textbf{for each}}
\algdef{S}[FOR]{ForEach}[1]{\algorithmicforeach\ #1\ \algorithmicdo}

\begin{algorithm}[t]
\caption{Verification Algorithm for Database Application Equivalence}
\label{algo:verify}
\begin{algorithmic}[1]

\vspace{0.05in}
\Procedure{\sc Verify}{$P$, $P'$}
\vspace{0.05in}
\State {\rm \bf Input:} program $P = (\S, \M, \Q)$ and $P' = (\S', \M', \Q')$
\State {\rm \bf Output:} Bisimulation invariant $\Phi$ establishing equivalence or $\bot$ to indicate failure
\vspace{0.05in}

\State $\universe$ := \textsf{GetAllPredicates}($\S$, $\S'$)
\State $\Phi$ := $\bigwedge_{\varphi \in \universe} \varphi$ 
\While{\textsf{CheckSufficiency}($\Phi$, $\Q$, $\Q'$)}
    \State ind := $\emph{true}$
    \ForEach{$\varphi \in \universe$}
        \If{$\neg$\textsf{CheckInductiveness}($\Phi$, $\M$, $\M'$, $\varphi$)}
                    \State ind := $\emph{false}$;  \  $\universe$ := $\universe \setminus \{ \varphi \}$;  \ $\Phi$ := $\bigwedge_{\varphi \in \universe} \varphi$;
            \State \textbf{break}
        \EndIf
    \EndFor
    \If{ind}{~ \Return $\Phi$;} \EndIf
\EndWhile
\State \Return $\bot$;
\EndProcedure

\vspace{0.1in}
\Procedure{\sc CheckSufficiency}{$\Phi$, $\Q$, $\Q'$}
\ForEach{$\lamvec{x}{Q_i} \in \Q$ and $\lamvec{y}{Q_i'} \in \Q'$}
    \If{ $\tra \not \models (\Phi \land \vec{x} = \vec{y} \to \attnum(Q_i) = \attnum(Q_i'))$}{~ \Return $\emph{false}$;} \EndIf
\EndFor
\State \Return $\emph{true}$;
\EndProcedure

\vspace{0.1in}
\Procedure{\sc CheckInductiveness}{$\Phi$, $\M$, $\M'$, $\varphi$}
\ForEach{$\lamvec{x}{U_i} \in \M$ and $\lamvec{y}{U_i'} \in \M'$}
\If{$\tra \not \models (sp(\Phi \land \vec{x} = \vec{y}, ~ {U_i}; {U_i'}) \to \varphi)$}{~ \Return $\emph{false}$;} \EndIf
\EndFor
\State \Return $\emph{true}$;
\EndProcedure

\end{algorithmic}
\end{algorithm}

Our technique for inferring suitable bisimulation invariants is shown in Algorithm~\ref{algo:verify}. Given two programs $P, P'$ with corresponding schemas $S, S'$, the {\sc Verify} procedure first generates the universe of all predicates that may be used in the bisimulation invariant (line 4). We generate all such predicates by instantiating the following database of predefined templates:
\begin{enumerate}
\item $\Pi_L ( ~ R ~ ) = \Pi_{L'} ( ~ R' ~)$
\item $\Pi_L ( ~ R_1 \Join R_2 ~ ) = \Pi_{L'} ( ~ R_1' ~)$
\item $\Pi_L ( ~ R ~ ) = \Pi_{L'} ( ~ R_1' \Join R_2' ~)$
\item $\Pi_L ( ~ R_1 \Join R_2 ~ ) = \Pi_{L'} ( ~ R_1' \Join R_2' ~)$
\end{enumerate}

In these templates, $L$ and $R$ represent an attribute list and a relation under schema $\S$, while $L'$ and $R'$ represent the attribute list and relation under schema $\S'$. Please note that we only consider  templates with at most one join operator on each side. Any predicate containing a longer join chain can be decomposed into several predicates of these forms.

Once we generate the universe $\universe$ of all predicates that may be used in the invariant, we perform a fixed point computation in which we iteratively \emph{weaken} the candidate bisimulation invariant. Initially, the candidate bisimulation invariant $\Phi$ starts out as the conjunction of all predicates in our universe. During the fixed point computation (lines 6--12 in Algorithm~\ref{algo:verify}), the candidate invariant $\Phi$ is always stronger than the actual bisimulation invariant. Hence, if we get to a point where $\Phi$ is not strong enough to show equivalence, we conclude that the program cannot be verified using conjunctive formulas over our templates (line 13). On the other hand, assuming that  $\Phi$ is strong enough to prove equivalence, we then proceed to check whether $\Phi$ is inductive (lines 7--12). If it is,   the strongest postcondition of $\Phi \land \vec{x} = \vec{y}$ must logically imply $\varphi$ for every predicate $\varphi$ used in $\Phi$. If some predicate $\varphi$ is not preserved by a pair of updates (i.e., call to \textsf{CheckInductiveness} returns false), we then remove $\varphi$ from both $\Phi$ and our universe of predicates $\universe$. We continue this process of weakening the invariant until it becomes an inductive bisimulation invariant, or we prove that no such invariant exists over our universe of predicates.

\section{Implementation} \label{sec:impl}

We have implemented the proposed verification technique in a new tool called \system, which consists of approximately {10,500} lines of Java code in total. \system utilizes the Z3 SMT solver~\cite{z3} to automate reasoning over the theory of relational algebra with updates. In particular, we decide the validity of a $\tra$ formula $\phi$ by asking Z3 whether the $\tra$ axioms logically imply $\phi$.  All queries to the solver are configured to have a time budget of 2 seconds, and we assume that the answer to any  query exceeding the time budget is ``invalid''.
In the remainder of this section, we describe several important optimizations that we have found necessary for making \system practical.

\newcounter{thmct}
\newcommand{\thmct}{\stepcounter{thmct}\arabic{thmct}.}

\begin{figure}
\centering
\[
\begin{array}{r l r}
    \thmct & A \cup \nil  = A & ({\rm \cup ~ nil}) \\
    \thmct & A \Join \nil = \nil & ({\rm \Join nil}) \\
    \thmct & A - A = \nil & ({\rm - ~ nil}) \\
    \thmct & \Pi_L(A \cup B) = \Pi_L(A) \cup \Pi_L(B) & ({\rm \Pi \cup distributivity}) \\
    \thmct & \sigma_\phi(A \cup B) = \sigma_\phi(A) \cup \sigma_\phi(B) & ({\rm \sigma \cup distributivity}) \\
    \thmct & (A \cup B) \Join C = (A \Join C) \cup (B \Join C) & ({\rm \Join \cup ~ distributivity}) \\
    \thmct & \sigma_\phi(A) \Join B = \sigma_\phi(A \Join B) & ({\rm \sigma \Join associativity 1}) \\
    \thmct & A \Join \sigma_\phi(B) = \sigma_\phi(A \Join B) & ({\rm \sigma \Join associativity 2}) \\
    \thmct & \sigma_\phi(\sigma_\phi(A)) = \sigma_\phi(A) & ({\rm \sigma ~ idempotence}) \\
    \thmct & (\Pi_L(A) = \Pi_{L'}(B) \land \phi \leftrightarrow \phi'[L/L']) \to \Pi_L(\sigma_{\phi}(A)) = \Pi_{L'}(\sigma_{\phi'}(B)) & ({\rm \Pi \sigma ~ introduction}) \\
    \thmct & \Pi_L (A) = \Pi_{L'} (B) \to \Pi_L (A \awr{L_i}{v}) = \Pi_{L'} (B \awr{L'_i}{v}) & ({\rm \Pi \awr{}{} ~ introduction}) \\
\end{array}
\]
\caption{List of additional (redundant) axioms used for proving validity.}
\label{fig:theorem}
\end{figure}

\paragraph{Redundant axioms} During the development  of \system, we have found that many validity queries cannot be resolved due to Z3's limited capabilities for performing inductive reasoning. In particular, some of the $\tra$ theorems needed for proving equivalence require performing structural induction over lists; but Z3 times out in most of these cases. In our implementation, we address this issue by providing a redundant set of additional  axioms, which are logically implied by the $\tra$ axioms. Figure~\ref{fig:theorem}  shows a representative subset of the additional theorems that we use when issuing validity queries to Z3. Because these axioms alleviate the need for performing induction, many queries that would otherwise time out can now be successfully proven using Z3. We would like to emphasize that we came up with these redundant axioms during tool development and did not have to add any additional axioms while evaluating \system on real-world examples in our experiments.

\paragraph{Conjunctive queries} 
\begin{comment}While the full theory of relational algebra with updates is undecidable, we have identified a class of formulas for which we can come up with a decision procedure. Let us call  a query \emph{conjunctive} if it uses only  projection, selection, and equi-join and  all predicates  are conjunctions of equalities. It has been shown by Green that checking equivalence between such conjunctive queries is decidable~\cite{semiring}. Given two such conjunctive queries $Q, Q'$, Green shows that $Q, Q'$ are equivalent if their corresponding Datalog representations are the same modulo variable renaming. We borrow this observation to prove the validity of formulas of the form $\Phi \rightarrow Q_i = Q_i'$, which arise when checking sufficiency of a candidate bisimulation invariant $\Phi$. If $Q_i, Q_i'$ are conjunctive queries, then we use the schema mapping induced by $\Phi$ to rewrite the query $Q_i$ to another query $Q_i''$ such that $Q_i', Q_i''$ refer to the same schema elements.
We then use the same procedure given by Green to show that $Q_i'$ and $Q_i''$ are equivalent.
\end{comment}
While the full theory of relational algebra with updates is undecidable, we have identified a class of formulas for which we can come up with an optimization to check validity. Let us call a query \emph{conjunctive} if it uses only projection, selection, and equi-join and all predicates are conjunctions of equalities. As pointed out in prior work,  two conjunctive queries are equivalent under bag semantics if (and only if) they are syntactically isomorphic~\cite{pods99, semiring}. Inspired by their work, we optimize the validity checking of formulas of the form $\Phi \rightarrow Q_i = Q_i'$, which arise when checking sufficiency of a candidate bisimulation invariant $\Phi$. If $Q_i, Q_i'$ are conjunctive queries, we can use the schema mapping induced by $\Phi$ to rewrite the query $Q_i$ to another query $Q_i''$ such that $Q_i', Q_i''$ refer to the same schema elements. If they are syntactically the same modulo reordering of equalities, we can conclude the original formula is valid; otherwise, we check its $\tra$-validity using an SMT solver.

\paragraph{Invariant synthesis} Recall from  Algorithm~\ref{algo:verify} that our verification procedure looks for conjunctive invariants over a universe of predicates $\universe$. While these predicates are constructed from a small set of pre-defined templates, the number of possible instantiations of these templates grows quickly in the number of attributes and relations in the database schema. To prevent a blow-up in the size of the universe $\universe$,  we use the implementation of insertion transactions to rule out infeasible predicates.
For example, we only generate a predicate $\Pi_{[a_1, a_2, \ldots, a_n]}(A) = \Pi_{[b_1, b_2, \ldots, b_n]}(B)$ if there are two corresponding insertion transactions $U, U'$ such that $U$ inserts its argument $x_i$ to attribute $a_i$ of relation $A$, whereas $U'$ inserts $x_i$ into attribute $b_i$ of relation $B$. Similarly, we only generate a predicate of the form $\Pi_{[a_1, a_2, \ldots, a_n]}(A) = \Pi_{[b_1, b_2, \ldots, b_n]}(B \Join C)$ if $B$ and $C$ can be joined and there are two corresponding transactions $U, U'$ such that $U$ inserts its argument $x_i$ to  attribute $a_i$ of relation $A$, whereas $U'$ inserts the same argument into attribute $b_i$ of relation $B$ or $C$.
We have found these heuristics work quite well in that they do not lead to a loss of completeness in practice but  significantly reduce the number of predicates considered by the invariant synthesis algorithm.

\paragraph{Proving refinement.}  Recall that proving refinement requires showing that the inferred simulation invariant $\Phi$ is projectively sufficient, i.e.,
\[ \Big ( \Phi \land \bigwedge_{x_j \in \vec{x}} x_j = y_j \Big ) \models   \exists L. ~ Q_i = \Pi_L(Q_i') 
\]
In our implementation, we determine the $\tra$-validity of this formula by instantiating the existentially quantified variables $L$ with attributes in the database schema and check validity for each possible instantiation. In particular, suppose that $Q_i, Q_i'$ can contain attributes $A, A'$ respectively.  Each instantiation of $L$  essentially corresponds to a mapping $M$ such that $M(A) \subseteq A'$. Our implementation rank-orders candidate mappings based on similarity metrics between attribute names and tries more likely instantiations first.

\section{Evaluation} \label{sec:eval}

In this section, we evaluate the ideas proposed in this paper by performing two experiments. The goal of our first experiment is to investigate the practical relevance of the class of equivalence checking problems that we address in this paper. In our second experiment, we  evaluate the practicality and usefulness of the \system tool by using it to verify equivalence/refinement between different versions of database-driven web applications.

\subsection{Study of Real-World Web Applications}\label{sec:empirical}

To the best of our knowledge, there is no prior work on proving equivalence between database-driven applications, so one may wonder whether this problem is  important in practice. In other words, do  developers need to re-implement parts of a web application due to changes to the underlying database schema? While there are many articles and on-line posts that attest to the frequency and difficulty of this task, we nonetheless perform an empirical study that aims to answer precisely this question.

\begin{table}[t]
\centering
\newcolumntype{A}{ >{\arraybackslash} m{0.10\textwidth} }
\newcolumntype{B}{ >{\arraybackslash} m{0.80\textwidth} }
\newcolumntype{C}{ >{\centering\arraybackslash} m{0.06\textwidth} }
\begin{tabular}{|A|B|}
\hline
Addition & Introduce new relations and/or attributes. \\
\hline
Deletion & Remove existing relations and/or attributes. \\
\hline
Type & Change the type of attributes in the schema. \\
\hline
Rename & Rename existing relations and/or attributes in the schema. \\
\hline
Structure & Change the structure of the schema, including but not limited to moving attributes from one relation to another, splitting a relation into multiple ones, merging multiple relations into one. \\
\hline
Index & Add, delete, or modify indices of the database system. \\
\hline
Constraint & Add, delete, or modify constraints of the database system. \\
\hline
Other & Bulk schema change of several other categories. \\
\hline
\end{tabular}
\vspace{0.10in}
\caption{Summary of categorization}
\label{tab:category}
\end{table}

In our study, we collect 100 database-driven web applications that (a) are written in Ruby-on-Rails and (b) have at least 400 commits on Github. We choose to study applications written in Ruby-on-Rails because the Rails framework provides a convenient way to view changes to the database schema. Among the applications  we have studied,  the underlying database schema had been changed \emph{at least} once in \emph{all} 100 applications, with the average application exhibiting around 28 schema changes over their lifetime. Table~\ref{tab:category} categorizes the schema changes that we have encountered, and Figure~\ref{fig:empirical} shows the percentage of applications that have undergone schema change of a certain type (addition, deletion etc.). For example, according to Figure~\ref{fig:empirical}, some attributes or relations were deleted from 62\% of these applications at least once in the commit history.

\definecolor{bblue}{HTML}{0064FF}
\definecolor{rred}{HTML}{C0504D}
\definecolor{ggreen}{HTML}{9BBB59}
\definecolor{ppurple}{HTML}{9F4C7C}

\begin{figure}[!t]
\centering
\begin{tikzpicture}
    \begin{axis}[
        width  = 10cm,
        height = 5cm,
        y=0.04cm,
        x=1.35cm,
        major x tick style = transparent,
        ybar=2*\pgflinewidth,
        bar width=16pt,
        ymajorgrids = true,
        ylabel = {Applications (\%)},
        ylabel style={yshift=-3mm},
        symbolic x coords={Addition,Deletion,Type,Rename,Structure,Index,Constraint,Other},
        xtick = data,
        ytick = {0,20,40,60,80,100},
        scaled y ticks = false,
        enlarge x limits=0.10,
        ymin = 0,
        ymax = 101,
        legend cell align=left,
        legend style={
                at={(0.5,1.08)},
                legend columns=-1,
                anchor=north,
        }
    ]
    \hspace*{-2mm}
        \addplot[style={bblue,fill=bblue,mark=none}]
        coordinates {(Addition,100)(Deletion,62)(Type,61)(Rename,54)(Structure,44)(Index,42)(Constraint,42)(Other,68)};

    \end{axis}
\end{tikzpicture}
\caption{Empirical Study. Percentage of applications that have different categories of database schema changes.}
\label{fig:empirical}
\end{figure}
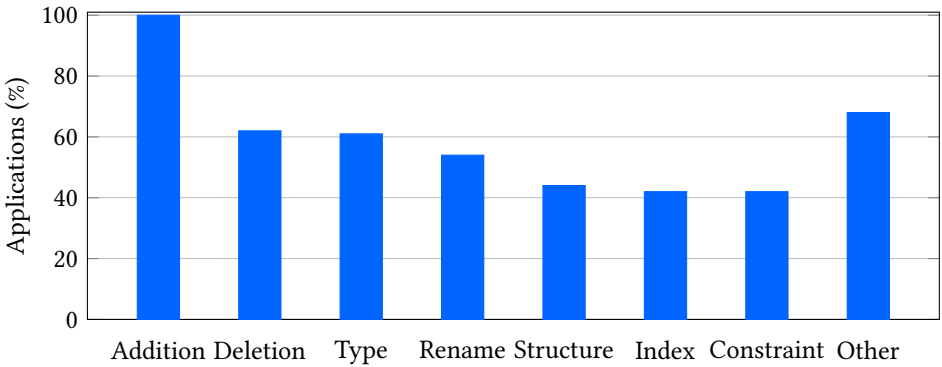

Of course, not all schema change categories listed in Table~\ref{tab:category} require significant rewriting of parts of the application code. For example,  100\% of the applications in our dataset 
exhibit a schema change induced by the addition of relations or attributes, but such additions typically require only minor changes to the application code. Therefore, many of the schema change categories listed in Table~\ref{tab:category} (e.g., addition, renaming) are not particularly interesting from  a verification perspective.  However, one of these categories, namely \emph{structure}, requires significant rewriting of the application code. In our definition, structural schema changes include splitting a relation into multiple ones, moving attributes from one relation to another, merging multiple relations into a single one etc. As shown in Figure~\ref{fig:empirical}, 44\% of the applications in our dataset have undergone a structural schema change at least once during their life time. Therefore, this empirical study demonstrates that structural schema changes are quite common and that web application developers can benefit from equivalence checking techniques that can handle non-trivial changes to the database schema.

\subsection{Evaluation of \system} \label{sec:result}
In the previous subsection, we demonstrated the importance of the problem that we introduce in this paper; however, we have not yet evaluated the feasibility of our solution.  Our goal in this section is to evaluate our \emph{solution} by using \system to prove equivalence/refinement between different versions of 21 database-driven web applications containing over 1000 transactions in total. 

\begin{table}[t]
\centering
\newcolumntype{A}{ >{\arraybackslash} m{0.23\textwidth} }
\newcolumntype{B}{ >{\arraybackslash} m{0.60\textwidth} }
\newcolumntype{C}{ >{\centering\arraybackslash} m{0.04\textwidth} }
\begin{tabular}{|c|c|A|B|}
\hline
& {\bf ID} & {\bf Source} & {\bf Description} \\
\hline
\hline
\multirow{11}{*}{\begin{turn} {90}\makecell{\small \bf textbook \ bench}\end{turn}}
& 1 & Oracle tutorial & Merge relations \\
\cline{2-4}
& 2 & Oracle tutorial & Split relations \\
\cline{2-4}
& 3 & Textbook & Split relations \\
\cline{2-4}
& 4 & Textbook & Merge relations \\
\cline{2-4}
& 5 & Textbook & Move attributes \\
\cline{2-4}
& 6 & Textbook & Rename attributes \\
\cline{2-4}
& 7 & Textbook & Introduce associative relations \\
\cline{2-4}
& 8 & Textbook & Replace the surrogate key with natural key \\
\cline{2-4}
& 9 & Textbook & Introduce new attributes \\
\cline{2-4}
& 10 & Textbook & Denormalization \\
\hline
\hline
\multirow{10}{*}{\begin{turn} {90}\makecell{\small \bf real-world \ bench}\end{turn}}
& 11 & cdx & Rename attributes and split relations \\
\cline{2-4}
& 12 & coachup & Split relations \\
\cline{2-4}
& 13 & 2030Club & Split relations \\
\cline{2-4}
& 14 & rails-ecomm & Split relations and introduce new attributes \\
\cline{2-4}
& 15 & royk & Introduce and move attributes \\
\cline{2-4}
& 16 & MathHotSpot & Rename relations and move attributes \\
\cline{2-4}
& 17 & gallery & Split relations \\
\cline{2-4}
& 18 & DeeJBase & Rename attributes and split relations \\
\cline{2-4}
& 19 & visible-closet-1 & Split relations \\
\cline{2-4}
& 20 & visible-closet-2 & Move attributes to a polymorphic relation \\
\cline{2-4}
& 21 & probable-engine & Merge relations \\
\hline
\end{tabular}
\vspace{0.10in}
\caption{Benchmark source and description. The textbook refers to~\cite{refactordb} and the Oracle tutorial refers to~\cite{oracle}.}
\label{tab:benchmark}
\end{table}

\paragraph{Benchmarks} To perform this evaluation, we collect benchmarks from two different sources, namely challenging refactoring examples from textbooks/tutorials and different versions of web applications collected from Github. The textbook examples are useful for evaluating \system, as they illustrate challenging database refactoring tasks that require non-trivial changes to the application code. 
The remaining half of the benchmarks are real-world web applications taken from the dataset used in our empirical evaluation. Specifically, we evaluate \system on two different versions $A, B$ of the application such that (a) $A, B$ are consecutive versions in the commit history, (b) $B$ is obtained from $A$ by performing a structural schema change that requires rewriting parts of the application code, and (c) $B$ is meant to be equivalent to (or a refinement of) $A$. Table~\ref{tab:benchmark} describes the source of each benchmark and changes to the database schema between the two versions. Since our current implementation requires manually translating the web application to our IR representation, we only used the first 10 real-world applications that meet the afore-mentioned criteria.~\footnote{It took us approximately three days of  manual effort to translate all of these applications to our IR. We plan to automate this transformation in future work. }

\begin{table}[t]
\centering
\newcolumntype{A}{ >{\arraybackslash} m{0.30\textwidth} }
\newcolumntype{B}{ >{\centering\arraybackslash} m{0.07\textwidth} }
\newcolumntype{C}{ >{\centering\arraybackslash} m{0.04\textwidth} }
\newcolumntype{D}{ >{\centering\arraybackslash} m{0.05\textwidth} }
\begin{tabular}{|c|c|C|c|B|B|B|B|D|c|c|B|}
\hline
& \multirow{2}{*}{{\bf ID}} & \multirow{2}{*}{\hspace{-0.05in}{\bf Type}} & \multirow{2}{*}{{\bf Trans}} & \multicolumn{2}{c|}{{\bf Source Schema}} & \multicolumn{2}{c|}{{\bf Target Schema}} & \multirow{2}{*}{\hspace{-0.05in}{\bf Status}} & {\bf Time} & \multirow{2}{*}{\bf Iters} & \multirow{2}{*}{\hspace{-0.05in}{\bf Queries}} \\
\cline{5-8}
& & & & {\bf Rels} & {\bf Attrs} & {\bf Rels} & {\bf Attrs} & & {\bf (s)} & & \\
\hline
\hline
\multirow{11}{*}{\begin{turn} {90}\makecell{\small \bf textbook \ bench}\end{turn}}
& 1 & $\simeq$ & 4 & 2 & 8 & 1 & 6 & $\checkmark$ & 2.4 & 2 & 10 \\
\cline{2-12}
& 2 & $\simeq$ & 19 & 3 & 17 & 7 & 25 & $\checkmark$ & 11.5 & 5 & 188 \\
\cline{2-12}
& 3 & $\simeq$ & 10 & 1 & 6 & 2 & 7 & $\checkmark$ & 2.5 & 2 & 32 \\
\cline{2-12}
& 4 & $\simeq$ & 10 & 2 & 7 & 1 & 6 & $\checkmark$ & 0.3 & 1 & 16 \\
\cline{2-12}
& 5 & $\simeq$ & 7 & 2 & 5 & 2 & 5 & $\checkmark$ & 34.1 & 7 & 147 \\
\cline{2-12}
& 6 & $\simeq$ & 5 & 1 & 2 & 1 & 2 & $\checkmark$ & 0.2 & 1 & 5 \\
\cline{2-12}
& 7 & $\simeq$ & 8 & 2 & 5 & 3 & 6 & $\checkmark$ & 3.1 & 2 & 61 \\
\cline{2-12}
& 8 & $\simeq$ & 10 & 2 & 9 & 2 & 8 & $\checkmark$ & 0.4 & 1 & 18 \\
\cline{2-12}
& 9 & $\preceq$ & 8 & 2 & 7 & 2 & 8 & $\checkmark$ & 0.3 & 1 & 14 \\
\cline{2-12}
& 10 & $\simeq$ & 14 & 3 & 10 & 3 & 13 & $\checkmark$ & 57.7 & 23 & 374 \\
\hline
\hline
\multirow{10}{*}{\begin{turn} {90}\makecell{\small \bf real-world \ bench}\end{turn}}
& 11 & $\simeq$ & 138 & 16 & 125 & 17 & 131 & $\checkmark$ & 90.8 & 13 & 4840 \\
\cline{2-12}
& 12 & $\simeq$ & 45 & 4 & 51 & 5 & 55 & $\checkmark$ & 23.2 & 7 & 489 \\
\cline{2-12}
& 13 & $\simeq$ & 125 & 15 & 155 & 16 & 159 & $\checkmark$ & 42.6 & 8 & 2403 \\
\cline{2-12}
& 14 & $\preceq$ & 65 & 8 & 69 & 9 & 75 & $\checkmark$ & 23.4 & 7 & 1059 \\
\cline{2-12}
& 15 & $\preceq$ & 151 & 19 & 152 & 19 & 155 & $\checkmark$ & 19.1 & 1 & 1307 \\
\cline{2-12}
& 16 & $\simeq$ & 54 & 7 & 38 & 8 & 42 & $\checkmark$ & 20.9 & 6 & 701 \\
\cline{2-12}
& 17 & $\simeq$ & 58 & 7 & 52 & 8 & 57 & $\checkmark$ & 54.5 & 13 & 1512 \\
\cline{2-12}
& 18 & $\simeq$ & 70 & 10 & 92 & 11 & 97 & $\checkmark$ & 28.7 & 6 & 1228 \\
\cline{2-12}
& 19 & $\simeq$ & 263 & 26 & 248 & 27 & 252 & $\checkmark$ & 150.6 & 12 & 9072 \\
\cline{2-12}
& 20 & $\simeq$ & 267 & 28 & 262 & 29 & 261 & $\times$ & - & - & - \\
\cline{2-12}
& 21 & $\simeq$ & 85 & 12 & 83 & 11 & 78 & $\checkmark$ & 13.7 & 3 & 823 \\
\hline
\end{tabular}
\vspace{0.10in}
\caption{Summary of experimental results. All  experiments are performed on a computer with Intel Xeon(R) E5-1620 v3 CPU and 32GB of memory, running Ubuntu 14.04 operating system.}
\label{tab:result}
\end{table}

\paragraph{Results} 
Table~\ref{tab:result} summarizes the  results of our evaluation of \system on these benchmarks. For each benchmark, the column labeled \emph{Type} shows whether we used \system to check refinement ($\preceq$) or equivalence ($\simeq$), and \emph{Trans} shows the number of transactions in each application. The next two columns provide information about the number of relations and total number of attributes in the source and target schema, respectively. The last four columns provide information about \system results: \emph{Status} shows whether \system was able to verify the desired property (i.e., equivalence or refinement), and \emph{Time} provides total running time in seconds. The column labeled \emph{Iters} shows the number of iterations that \system takes to find an inductive simulation (or bisimulation) invariant. Finally, the last column labeled \emph{Queries} shows the number of validity modulo $\tra$ checks issued by \system.

As we can see from Table~\ref{tab:result}, \system is able to successfully verify the desired property for {20 out of 21} benchmarks. The running time of the tool ranges between {0.2 seconds} for small textbook examples with a few transactions to 150 seconds for large, real-world benchmarks with hundreds of transactions. As expected, the running time of \system on real-world benchmarks is typically much longer (46.8 seconds on average) than on textbook examples (11.3 seconds on average). However, some textbook examples (namely benchmarks 5 and 10) take longer than  some of the real-world examples because  many iterations are required to find an inductive bisimulation invariant. As shown in Figure~\ref{fig:correlation}(b), the running time of \system is roughly linear in the number of validity queries to the SMT solver. Because the number of validity checks depends on the number of transactions in the application as well as the number of iterations required for finding an inductive bisimulation invariant, Figure~\ref{fig:correlation}(a) also shows that system's running time is roughly linear with respect to $\#\emph{transactions} \times \# \emph{iterations}$.

\definecolor{bblue}{HTML}{0064FF}
\definecolor{rred}{HTML}{C0504D}
\definecolor{ggreen}{HTML}{9BBB59}
\definecolor{ppurple}{HTML}{9F4C7C}

\begin{figure}[!t]
\centering
\begin{subfigure}[b]{0.46\textwidth}
\scalebox{0.7}{
\begin{tikzpicture}
\begin{axis}[
    domain=0:3500,
    xtick={0,1000,...,3000},
    ytick={0,30,...,180},
    xmajorgrids=true,ymajorgrids=true,
    scaled x ticks=false,
    scaled y ticks=false,
    xlabel = {Transaction Number $\times$ Iteration Number},
    ylabel = {Running Time (s)},
]
\addplot+[mark=none, line width=2pt]{0.04592*x+7.920};
\addplot+[only marks, mark=*, mark color=bblue, mark options={scale=1.5, fill=bblue}, text mark as node=true] coordinates {
    (8, 2.4)
    (95, 11.5)
    (10, 2.5)
    (10, 0.3)
    (49, 34.1)
    (5, 0.2)
    (16, 3.1)
    (10, 0.4)
    (8, 0.3)
    (322, 57.7)
    (1794, 90.8)
    (315, 23.2)
    (1000, 42.6)
    (455, 23.4)
    (151, 19.1)
    (348, 20.9)
    (754, 54.5)
    (420, 28.7)
    (3156, 150.6)
    (255, 13.7)
};
\end{axis}
\end{tikzpicture}
}
\caption{Running time with respect to number of transactions and iterations}
\end{subfigure}
\quad
\begin{subfigure}[b]{0.46\textwidth}
\scalebox{0.7}{
\begin{tikzpicture}
\begin{axis}[
    domain=0:10000,
    xtick={0,2000,...,10000},
    ytick={0,30,...,180},
    xmajorgrids=true,ymajorgrids=true,
    scaled x ticks=false,
    scaled y ticks=false,
    xlabel = {Query Number},
    ylabel = {Running Time (s)},
]
\addplot+[mark=none, line width=2pt]{0.01593*x+9.659};
\addplot+[only marks, mark=*, mark color=bblue, mark options={scale=1.5, fill=bblue}, text mark as node=true] coordinates {
    (10, 2.4)
    (188, 11.5)
    (18, 2.5)
    (16, 0.3)
    (147, 34.1)
    (5, 0.2)
    (61, 3.1)
    (18, 0.4)
    (14, 0.3)
    (374, 57.7)
    (4840, 90.8)
    (489, 23.2)
    (2403, 42.6)
    (1059, 23.4)
    (1307, 19.1)
    (701, 20.9)
    (1512, 54.5)
    (1228, 28.7)
    (9072, 150.6)
    (823, 13.7)
};
\end{axis}
\end{tikzpicture}
}
\caption{Running time with respect to number of $\tra$ validity queries}
\end{subfigure}

\caption{Relationship between running time, transaction number, iteration number and query number}
\label{fig:correlation}
\end{figure}
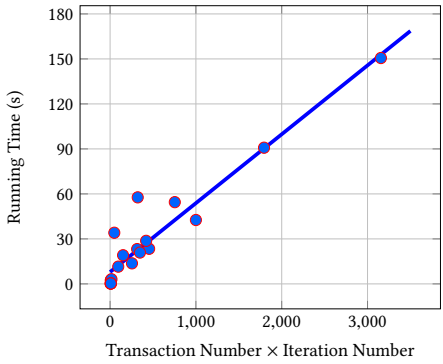
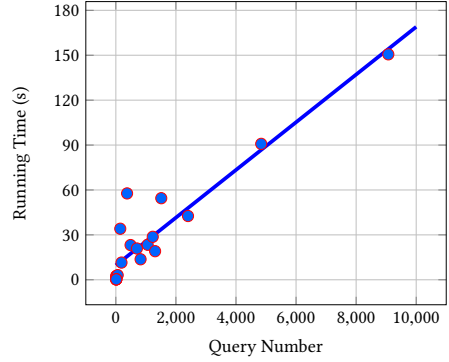

\paragraph{Cause of false positives.} 
As shown in Table~\ref{tab:result}, \system fails to verify equivalence for benchmark 20, where the schema change involves moving the shared attributes of two relations into a new polymorphic relation. Upon further inspection, we determined this warning to be a false positive that is caused by a shortcoming of our inference engine for synthesizing bisimulation invariants. In particular, proving equivalence of this benchmark requires a bisimulation invariant of the form $\Pi_L(R) = \Pi_{L'}(\sigma_{\phi}(R'))$, which is currently not supported in our implementation (recall Section~\ref{sec:inv}). While it is possible to 
extend our templates to include predicates of this form, this modification would significantly increase the search space, as the selection predicate $\phi$ can be instantiated in \emph{many} different ways.

\section{Related Work}

The research problem that we address in this paper is related to a long line of work from the programming languages and databases communities. In what follows, we survey papers that are most relevant and explain how they differ from our approach.

\vspace{0.1in}\noindent
{\bf \emph{Translation Validation.}} One of the most well-known applications of equivalence checking is \emph{translation validation}, where the goal is to prove that the compiled version of the code is equivalent to the original one.
\cite{tv-pnueli, tv-necula, tv-rinard, tv-lerner,tv-zuck,covac}. More recent work extends translation validation to \emph{parameterized equivalence checking} (PEC), which aims to prove equivalence between templatized programs representing many concrete programs~\cite{pec}. Most of the work in this area focuses on imperative programs and proves equivalence by inferring some form of bisimulation relation. Another common technique for proving equivalence is to generate a so-called \emph{product program}~\cite{covac,product-barthe} and reduce the equivalence checking problem to the safety verification of a single program. Rather than validating the correctness of compiler optimizations, our goal in this work is to show equivalence between database-driven programs before and after changes to the database schema. Our bisimulation invariants relate database states rather than program variables and are expressed in the theory of relational algebra with updates instead of standard first-order theories directly supported by SMT solvers.

\vspace{0.1in}\noindent
{\bf \emph{Contextual Equivalence.}}
There has also been a significant body of work on verifying \emph{contextual equivalence}, where the goal is to determine whether two expressions  are equivalent under any program context. One important application of contextual equivalence is to identify compiler optimization opportunities, particularly in the context of functional programming languages. For example, Sumii and Pierce define an untyped call-by-value lambda calculus with sealing and present a bisimulation-based approach to prove contextual equivalence~\cite{sumii-popl04}. They later present another sound and complete proof methodology based on bisimulation relations, but apply it to a lambda calculus with full universal, existential, and recursive types~\cite{sumii-popl05}. Koutavas and Wand extend this line of work to prove contextual equivalence in an untyped lambda calculus with an explicit store by introducing a new notion of bisimulation.  Their method enables constructive proofs in the presence of higher-order functions~\cite{koutavas-popl06}. They also extend the same proof technique to the imperative untyped object calculus~\cite{koutavas-esop06}. Sangiorgi et al. step further and develop a notion of \emph{environmental bisimulation} for higher-order languages. Their technique does not require induction on evaluation derivations  and is applicable to different calculi, ranging from pure lambda calculus to higher-order $\pi$-calculus~\cite{sangiorgi-lics07, sangiorgi-toplas11}. Existing techniques for proving contextual equivalence offer a limited degree of automation and do not address database-driven applications.

\vspace{0.1in}\noindent
{\bf \emph{Regression Verification.}} Regression verification is concerned with checking equivalence across different versions of an evolving program~\cite{regression1,symdiff1,symdiff3,felsing-ase14}.
Unlike translation validation where the two programs are expressed at different levels of abstraction and transformations are intra-procedural, regression verification deals with two programs across a refactoring, feature addition or bug-fix with changes possibly spread across procedures. 
Approaches range from the use of differential symbolic execution~\cite{symdiff3} for loop-free and recursion-free programs, techniques based on uninterpreted functions to deal with mutual recursion~\cite{regression1,symdiff1}, use of  mutual summaries for checking heap-manipulating programs~\cite{symdiff4,wood-esop17}, to the use of standard invariant inference for inferring intermediate relational invariants~\cite{symdiff2,felsing-ase14}.
Because the invariants and  assertions relate program variables, the resulting verification conditions are expressible in first-order theories  supported by  SMT solvers. Existing regression verification techniques do not handle database-driven applications and therefore cannot reason about changes to the database schema.

\vspace{0.1in}\noindent
{\bf \emph{Relational Program Logics.}} The problem of verifying equivalence between a pair of programs can be viewed as a special kind of \emph{relational verification} problem in which the goal is to prove relational Hoare triples of the form $\{ P \} \ S_1 \sim S_2 \  \{Q\}$. Here, $P$ is a relational pre-condition that relates inputs to programs $S_1, S_2$, and $Q$ is a relational post-condition that relates their outputs. In the context of equivalence checking, the pre-condition simply assumes equality between program inputs, and the post-condition asserts equality between their outputs. Prior work has presented program logics, such as Relational Hoare Logic and Cartesian Hoare logic, for showing such relational correctness properties~\cite{rhl,chl,rsl}.  As mentioned earlier, another common technique for proving relational correctness properties is to construct a product program $S_1 \times S_2$, which is semantically equivalent to $S_1; S_2$ but that is somehow easier to verify~\cite{product-barthe,product-barthe2}. In contrast to existing relational verification techniques that work on imperative programs, our approach addresses database-driven programs that work over different schema. Furthermore, while the main focus of this paper is to verify equivalence/refinement between programs, we believe our technique can be easily extended for proving other relational correctness properties.

\vspace{0.1in}\noindent
{\bf \emph{Testing and Verification of Database-Driven Applications.}} There has been a significant body of prior work on testing and analyzing database-driven applications~\cite{wave1, wave2, wave3,rubicon,dpf,apollo,agenda1,wassermann,emmi, BGL98}.
Some of these techniques aim to verify integrity constraints, find shallow bugs, or identify security vulnerabilities, while others attempt to uncover violations of functional correctness properties. For example, \cite{BGL98} statically verifies that database transactions preserve integrity constraints by means of computing weakest preconditions.
As another example, the Agenda framework generates test cases by randomly populating the database with tuples that satisfy schema constraints~\cite{agenda1,agenda2}, and several papers use concolic testing to find crashes or SQL injection vulnerabilities~\cite{apollo,wassermann}. Gligoric and Majumdar describe an explicit state model checking technique for database-driven applications and use this technique to find concurrency bugs~\cite{dpf}.

There have also been proposals for checking functional correctness of database-driven applications. For example, Near and Jackson present a bounded verifier that uses symbolic execution to check functional correctness properties specified using an extension of the RSpec specification language~\cite{rubicon}. As another example, the WAVE project allows users to specify functional correctness properties using LTL formulas and model checks a given database-driven application against this specification~\cite{wave1,wave2,wave3}. The WAVE framework can also be used to automatically synthesize web applications from declarative specifications~\cite{wave2}. However, we are not aware of any existing work on verifying relational correctness properties of database-driven applications.

\vspace{0.1in}\noindent
{\bf \emph{Transformation of Database-Driven Applications.}}
There has also been some prior  work on optimizing database-driven applications~\cite{qbs,fiat}. For example, the \qbs system detects performance bugs in web applications and repairs them by replacing inefficient Java code with SQL queries~\cite{qbs}. Similar to our work, \qbs also models database relations using lists and defines a \emph{theory of ordered relations} for reasoning about loop invariants. However, the theory of ordered relations proposed by Cheung et al. does not allow reasoning about updates to database relations.

Another related work in this space is the \fiat system for synthesizing abstract datatypes (ADTs) with SQL-like operations~\cite{fiat}. Given a user-provided specification, \fiat uses domain-specific refinement theorems and automation tactics to transform the given specification into an (ideally efficient) implementation. However, unlike our approach, \fiat requires  different implementations to use the same schema.

\vspace{0.1in}\noindent
{\bf \emph{Query Equivalence.}}
Query equivalence has been a long-standing problem in the database community. For example, Chandra and Merlin study the equivalence of conjunctive queries under set semantics and show that every conjunctive query has a unique representation up to isomorphism~\cite{conjquery}.  Aho et al. use tableaus to represent conjunctive queries and present a polynomial time algorithm for deciding equivalence between certain kinds of conjunctive queries~\cite{sagiv79}. Sagiv and Yannanakis generalize this tableau approach to union and difference operators, but place limitations on the use of  difference~\cite{sagiv80}. In more recent work, Green studies equivalence between conjunctive queries and gives an algorithm for deciding their equivalence~\cite{semiring}. As mentioned in Section~\ref{sec:impl}, our implementation leverages insights from this work when checking validity of certain classes of $\tra$ formulas.

In the past year, there has been significant activity on proving query equivalence using interactive theorem provers and constraint solvers. Specifically,  Chu et al. define a new semantics for SQL based on K-Relations and homotopy type theory~ and develop a Coq library to interactively prove equivalence between SQL queries~\cite{hottsql}. Another recent work by Chu et al. provides a greater degree of automation by incorporating constraint solvers~\cite{cosette}. Specifically, their tool, \cosette,  first translates each SQL query into a corresponding logical formula and uses an SMT solver to find a counterexample that shows the queries are \emph{not} equivalent. If \cosette fails to find a counterexample, it then tries to prove equivalence using the Coq theorem prover augmented with a library of domain-specific tactics. While tools like \cosette could be useful for deciding validity of some of our $\tra$ formulas, existing tools do not support reasoning about updates to the database.

\vspace{0.1in}\noindent
{\bf \emph{Schema Evolution.}}
In the database community, \emph{schema evolution} refers to the problem of evolving database schemas to adapt to requirement changes. Schema evolution also typically requires  data migration and synchronization of legacy transactions, and many papers from the database community address the problem of facilitating schema evolution~\cite{evo-survey1, evo-survey2, prism, schema-matching, cst}. For example, the {\sc Prism} and {\sc Prism++} projects~\cite{prism} investigate techniques for automatically evolving transactions using a given set of SQL-based schema modification operators.

As another example, Fagin et al. study the schema mapping adaptation problem using composition and inversion operators~\cite{schema-matching} and work by Visser utilizes strategy combinators and point-free program calculation to handle coupled transformation of schemas, data instances, queries, and constraints~\cite{cst}.
In contrast to our method, existing work on schema evolution cannot be used to prove equivalence between any pair of database-driven applications.

\vspace{0.1in}\noindent
{\bf \emph{Schema Equivalence.}}
There has also been some work on proving schema equivalence under various different definitions of equivalence~\cite{beeri-stoc79,miller-vldb93,rosenthal-tods94}. Under one definition, two schemas are considered equivalent if there is a bijection from the set of database instances from one schema to that of another~\cite{miller-vldb93, rosenthal-tods94}. According to another definition,  two schemas $S_1, S_2$ are equivalent if there is a  query mapping from $S_1$ to $S_2$ such that its inverse is also a valid mapping from $S_2$ to $S_1$\cite{atzeni-tcs82, hull-siam86, albert-jcss99}. While many of these papers provide algorithms for deciding schema equivalence according to these definitions, they do not address the problem of verifying equivalence between  applications that operate over databases with different schemas.

\section{Limitations}\label{sec:limitation}

As a research prototype, our current implementation of \system has a number of limitations:
First, \system analyzes programs that are written in the database transaction language given in Figure~\ref{fig:ir}. Hence, the applicability of \system relies on translating the original database-driven application to our IR, which abstracts programs as a fixed set of queries and updates to the database. Thus, programs that use dynamically generated transactions or control-flow constructs cannot be translated into our IR. Second, \system synthesizes simulation and bisimulation invariants by finding the strongest conjunctive formula over a given class of predicates. However, as exemplified by the false positive from our evaluation, \system may not be able to prove equivalence if the bisimulation requires additional predicates (or boolean connectives) beyond the ones we consider.   Third, \system axiomatizes $\tra$ using the theory of lists, which is also undecidable. Therefore, the SMT solver may time-out when checking validity queries over the theory of lists. Fourth, \system can only be used to prove equivalence but not \emph{disequivalence}. In particular, \system cannot provide witnesses to prove that two applications are indeed not equivalent. Finally, our verification technique proves equivalence under \emph{list semantics}. Thus, if a web application uses set/bag semantics to represent the results of database queries, \system may end up reporting false positives. However, despite these limitations, our evaluation shows that \system is still practical and that it can verify equivalence between different versions of many real-world web applications.

\section{Conclusions and Future Work}

In this paper, we have introduced the problem of verifying equivalence between applications that operate over databases with different schemas.  As shown by our empirical study from Section~\ref{sec:empirical}, this problem is practically relevant because many web applications undergo a structural schema change that requires re-implementation of parts of the application code. Our technique for verifying equivalence relies on finding an inductive bisimulation invariant over the theory of relational algebra with updates and automatically discharges verification conditions that are needed to prove equivalence. We have implemented this technique in a new tool called \system and evaluate it on different versions of database-driven web applications containing up to hundreds of transactions. Our evaluation shows that  \system can successfully verify equivalence/refinement between 95\% of the web applications it was evaluated on and that involve structural changes to the underlying database schema.

We see this work as a first step towards verifying equivalence of database-driven applications. In future work, we plan to address  the limitations discussed in Section~\ref{sec:limitation} to increase the applicability of the tool. First, we plan to develop a richer intermediate representation that can model {conditional updates and other} interactions between the application and the database. While this extension would likely not necessitate fundamental changes to the proposed verification methodology, it would require performing reasoning over a richer logical theory (e.g., combined theory of Presburger arithmetic, relational algebra, arrays, etc). Second,  we plan to automate the translation of web applications written in realistic languages (e.g., Ruby, PHP, etc) into our intermediate language. Third, we are interested in adding counterexample-generation capabilities to \system so that it can prove that two applications are definitely not equivalent.

\section*{Acknowledgments}\label{sec:ack}
We would like to thank James Cheney, Kostas Ferles, Xinyu Wang, Yu Feng, Jacob Van Geffen, and the anonymous POPL'18 reviewers for their thorough and helpful comments on an earlier version of this paper. This material is based on research sponsored by NSF Awards \#1712060 and \#1453386.

\bibliography{main}

\newcommand*{\extended}{} % comment to remove appendices
\ifdefined\extended
\newpage
\appendix
\section{Proof of Theorems} \label{sec:proof}

%%%%%%%%%%%%%%%%%%%%%%%%%%%%%%%%%%%%%%%%%%%%%%%%%%%%%%%%%%%%%%%%
% sp predicate
%%%%%%%%%%%%%%%%%%%%%%%%%%%%%%%%%%%%%%%%%%%%%%%%%%%%%%%%%%%%%%%%
\begin{lemma} \label{thm:sp-pred}
    Suppose $\denot{\phi}_{\sigma, \Delta, x} = p$, then $\attnum(\Delta) \uplus \sigma \uplus [x \mapsto c] \models \phi(x)$ if and only if $p[c/x] = true$.
\end{lemma}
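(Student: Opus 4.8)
The plan is to prove the statement by structural induction on the predicate $\phi$, reading both sides as ``evaluate $\phi$ at the concrete tuple $c$ under valuation $\sigma$ and instance $\Delta$.'' The left-hand side $\attnum(\Delta) \uplus \sigma \uplus [x \mapsto c] \models \phi(x)$ interprets the $\tra$ auxiliary formula $\phi(x)$ of Figure~\ref{fig:axiom-aux} in the model that binds $x$ to $c$, interprets each program variable according to $\sigma$, and interprets each table $R$ as the list-of-lists $\attnum(\Delta)(R)$; the right-hand side $p[c/x] = \emph{true}$ evaluates the symbolic predicate $p = \denot{\phi}_{\sigma,\Delta,x}$ produced by the denotational semantics of Figure~\ref{fig:semantics} at $x = c$. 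The goal is to show that these two Booleans coincide for every $\phi$.

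For the atomic comparison cases $a \odot a'$ and $a \odot v$, the key observation is that $\emph{lookup}$ and $\texttt{get}$ agree on the list representation: since $c$ represents the tuple as $[v_1, \ldots, v_n]$ and $\attnum(a)$ is the index $i$ of attribute $a$, we have $\emph{lookup}(c, a) = \texttt{get}(i, c)$. Combined with the fact that $v[\sigma]$ agrees with the interpretation of $v$ under $\sigma$ in the model (constants interpret to themselves, variables to $\sigma(v)$), both sides reduce to the same atomic comparison $\texttt{get}(i, c) \odot \texttt{get}(j, c)$ (respectively $\texttt{get}(i,c) \odot v[\sigma]$), so they evaluate identically. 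The Boolean-connective cases $\phi_1 \land \phi_2$, $\phi_1 \lor \phi_2$, and $\neg \phi_1$ are immediate: both the denotational clauses of Figure~\ref{fig:semantics} and the auxiliary clauses of Figure~\ref{fig:axiom-aux} are homomorphic over $\land$, $\lor$, $\neg$, so the inductive hypotheses for the subpredicates compose directly.

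The crux is the membership case $\phi = a \in Q$. Here the denotational clause yields $p = \emph{contains}(\emph{lookup}(c,a), \emph{map}(\denot{Q}_{\sigma,\Delta}, \lambda y.\emph{head}(\emph{vals}(y))))$, whereas the $\tra$ clause yields $\exists j.\, \texttt{get}(0, \texttt{get}(j, t)) = \texttt{get}(i, c)$ with $t = \attnum(Q)$. To match them I would first invoke a companion \emph{query-correspondence} lemma asserting that the denotation of a query, re-represented as lists, coincides with the value of its $\tra$ encoding in the corresponding model, i.e. $\attnum(\denot{Q}_{\sigma,\Delta}) = \denot{\attnum(Q)}$ under $\attnum(\Delta)\uplus\sigma$. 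Given this, $\texttt{get}(j,t)$ ranges exactly over the tuples of the query result and $\texttt{get}(0,\cdot)$ selects each tuple's head, so the existential over $j$ holds iff $\emph{lookup}(c,a)$ occurs among the heads of the result tuples, which is precisely what $\emph{contains}$ on the mapped list tests.

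The main obstacle is therefore establishing this query-correspondence lemma, and the difficulty is that it is \emph{mutually} recursive with the present lemma: the selection case $\sigma_\phi(Q)$ of the query lemma appeals back to this predicate lemma, while the membership case here appeals to the query lemma. I would resolve this by stating the two correspondences as a single statement proved by simultaneous induction on the combined syntactic size of queries and predicates, discharging projection, product, union, and difference by direct calculation against the list combinators of Figure~\ref{fig:semantics} and the axioms of Figure~\ref{fig:axiom}, and handling the two cross-referencing cases (selection and membership) by the respective inductive hypothesis of the other syntactic category. The only genuinely delicate points in that calculation are matching $\emph{contains}$ against the bounded existential and checking that $\attnum$ commutes with each relational operator, both of which are straightforward but tedious structural equalities.
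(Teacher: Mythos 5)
Your proposal is correct and follows essentially the same route as the paper's own proof: structural induction on $\phi$, the $\emph{lookup}$/$\texttt{get}$ correspondence (via $\attnum$) for the atomic comparison cases, homomorphic treatment of $\land$, $\lor$, $\neg$, and an appeal to a companion query-correspondence lemma (the paper's Lemma~\ref{thm:sp-query}) for the $a \in Q$ case. Your one refinement is the explicit simultaneous induction on combined syntactic size to discharge the mutual recursion between the two lemmas, which is more careful than the paper's informal remark that the cross-reference ``would not cause circular proof assuming the program is of finite length.''
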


\begin{proof}
By structural induction on $\phi$.
\begin{enumerate}
    \item Base case: $\phi = a_1 \odot a_2$. We know $\denot{\phi}_{\sigma, \Delta, x} = \emph{lookup}(x, a_1) \odot \emph{lookup}(x, a_2)$ and $\phi(x) = get(i, x) \odot get(j, x)$ where $i,j$ are the indices of $a_1, a_2$. By the assumption that $\attnum$ is correct, we have $\emph{lookup}(x, a) = get(k, x)$ if $k$ is the index of $a$. Thus $\emph{lookup}(c, a_1) \odot \emph{lookup}(c, a_2) \leftrightarrow get(i, c) \odot get(j, c)$.
    \item Base case: $\phi = a \odot v$. We know $\denot{\phi}_{\sigma, \Delta,x} = \emph{lookup}(x, a_1) \odot v[\sigma]$ and $\phi(x) = get(i, x) \odot v$. Again, by the assumption that $\attnum$ is correct, we have $\sigma \models get(i, c) \odot v$ if and only if $\emph{lookup}(c, a) \models v[\sigma]$ for any value of $c$.
    \item Base case: $\phi = a \in Q$. Suppose $i$ is the index of $a$, and $\denot{Q}_{\sigma, \Delta} = R$. We have
        \[
            \denot{\phi}_{\sigma, \Delta, x} = \emph{contains}(\emph{lookup}(x, a), map(R, \lambda y.~\emph{head}(\emph{vals}(y))))
        \]
        and
        \[
            \phi(x) = \emph{contains}(get(i, x), \emph{firstColumn}(\attnum(Q)))
        \]
        where $\emph{firstColumn}(\attnum(Q))$ means extracting the first value of each ``row'' (inner list) in $\attnum(Q)$ and converting it to a list, which follows from the definition of $(a_i \in t)(h)$ in Figure~\ref{fig:axiom-aux}. By the correctness of $\attnum$, $\emph{lookup}(x, a) = get(i, x)$ for the same value of $x$. Also, from Lemma~\ref{thm:sp-query} and $\denot{Q}_{\sigma, \Delta} = R$, we have $\attnum(\Delta) \uplus \sigma \models \attnum(Q) = \attnum(R)$. Thus,
        \[
            \emph{firstColumn}(\attnum(Q)) = map(R, \lambda y.~\emph{head}(\emph{vals}(y)))
        \]
        Hence, $\attnum(\Delta) \uplus \sigma \uplus [x \mapsto c] \models \emph{contains}(get(i, x), \emph{firstColumn}(\attnum(Q)))$ if and only if
        \[
            \emph{contains}(\emph{lookup}(c,a), map(R, \lambda y. \emph{head}(\emph{vals}(y)))
        \]
        evaluates to $true$.
    \item Inductive case: $\phi = \phi_1 \land \phi_2$. Suppose $\denot{\phi_1}_{\sigma, \Delta, x} = p_1$ and $\attnum(\Delta) \uplus \sigma \uplus [x \mapsto c] \models \phi_1(x)$ iff $p_1[c/x] = true$. Also suppose $\denot{\phi_2}_{\sigma, \Delta, x} = p_2$ and $\attnum(\Delta) \uplus \sigma \uplus [x \mapsto c] \models \phi_2(x)$ iff $p_2[c/x] = true$. Obviously, $(p_1 \land p_2)[c/x] = true$ iff both $p_1[c/x]$ and $p_2[c/x]$ are $true$. Thus $\attnum(\Delta) \uplus \sigma \uplus [x \mapsto c] \models (\phi_1 \land \phi_2)(x)$ iff $(p_1 \land p_2)[c/x] = true$.
    \item Inductive case: $\phi = \phi_1 \lor \phi_2$. Suppose $\denot{\phi_1}_{\sigma, \Delta, x} = p_1$ and $\attnum(\Delta) \uplus \sigma \uplus [x \mapsto c] \models \phi_1(x)$ iff $p_1[c/x] = true$. Also suppose $\denot{\phi_2}_{\sigma, \Delta, x} = p_2$ and $\attnum(\Delta) \uplus \sigma \uplus [x \mapsto c] \models \phi_2(x)$ iff $p_2[c/x] = true$. Obviously, $(p_1 \lor p_2)[c/x] = true$ iff either $p_1[c/x]$ or $p_2[c/x]$ is $true$. Thus $\attnum(\Delta) \uplus \sigma \uplus [x \mapsto c] \models (\phi_1 \lor \phi_2)(x)$ iff $(p_1 \lor p_2)[c/x] = true$.
    \item Inductive case: $\phi = \neg \phi_1$. Suppose $\denot{\phi_1}_{\sigma, \Delta, x} = p_1$ and $\attnum(\Delta) \uplus \sigma \uplus [x \mapsto c] \models \phi_1(x)$ iff $p_1[c/x] = true$. Since $(\neg p_1)[c/x] = true$ iff $p_1[c/x]$ is $false$, $\attnum(\Delta) \uplus \sigma \uplus [x \mapsto c] \models (\neg \phi_1)(x)$ iff $(\neg p_1)[c/x] = true$.
\end{enumerate}
\end{proof}

%%%%%%%%%%%%%%%%%%%%%%%%%%%%%%%%%%%%%%%%%%%%%%%%%%%%%%%%%%%%%%%%
% sp query
%%%%%%%%%%%%%%%%%%%%%%%%%%%%%%%%%%%%%%%%%%%%%%%%%%%%%%%%%%%%%%%%
\begin{lemma} \label{thm:sp-query}
If $\denot{Q}_{\sigma, \Delta} = R$, then $\attnum(\Delta) \uplus \sigma \models \attnum(Q) = \attnum(R)$.
\end{lemma}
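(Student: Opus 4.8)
The plan is to prove the statement by structural induction on the query $Q$. Because the grammars of queries and predicates are mutually recursive—selections and joins carry predicates $\phi$, while an atomic predicate $a \in Q$ embeds a query—this induction cannot be carried out in isolation: the selection case below needs the predicate-level correspondence of Lemma~\ref{thm:sp-pred}, and the proof of that lemma (its $a \in Q$ case) in turn appeals to this lemma. I would therefore set up a single \emph{mutual} structural induction on the combined syntax tree of queries and predicates, proving Lemmas~\ref{thm:sp-query} and~\ref{thm:sp-pred} simultaneously with the induction measure being the size of the syntactic object. Throughout, I would use the basic correctness property of $\attnum$—namely that if attribute $a$ has index $i$ then $\emph{lookup}(x, a) = \texttt{get}(i, x)$—to translate every name-based access in Figure~\ref{fig:semantics} into the index-based access used by the axioms of Figure~\ref{fig:axiom}.

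The base case is $Q = R$: here $\denot{R}_{\sigma, \Delta} = \Delta(R)$, and in the model $\attnum(\Delta) \uplus \sigma$ the table symbol is interpreted as exactly $\attnum(\Delta(R))$, so the equation holds by definition. The easy inductive cases are union and difference. For $Q_1 \cup Q_2$, both the semantics ($\emph{append}$) and the $\tra$ axiom ($\cup$) are list concatenation, so the result follows from the two induction hypotheses. For $Q_1 - Q_2$, I would observe that the semantic $\emph{delete}(y, ys)$ removes the first occurrence of $y$, matching the auxiliary $-'$ operator of the minus axiom, and then fold the two induction hypotheses through the $\emph{foldl}$. For projection $\Pi_\psi(Q)$ and selection $\sigma_\phi(Q)$, I would first apply the induction hypothesis to rewrite $\attnum(Q)$ as $\attnum(\denot{Q}_{\sigma, \Delta})$, and then show that the per-tuple transformation performed by the axiom agrees with the semantic one: for projection, that projecting on the index list $\attnum(\psi)$ via repeated $\texttt{get}$ yields the same tuple as the name-based $\emph{filter}$; for selection, that the axiom's row test $\phi(h)$ retains exactly the rows kept by $\lambda x.\denot{\phi}_{\sigma,\Delta,x}$, which is precisely the content of Lemma~\ref{thm:sp-pred}. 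The join case $Q_1 \Join_\phi Q_2$ reduces to the Cartesian-product and selection cases, since both the semantics and $\tra$ define it as $\sigma_\phi(Q_1 \times Q_2)$.

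I expect the main obstacle to be the Cartesian-product case and the attribute-index bookkeeping it forces throughout. The denotational semantics combines a left tuple $y$ with a right tuple $z$ via $\emph{merge}(y, z)$ on maps with disjoint keys, whereas the $\tra$ axiom builds the product from $\texttt{cat}$, which concatenates the list encodings. The crux is to prove a compatibility statement of the form $\attnum(\emph{merge}(y, z)) = \texttt{cat}(\attnum(y), \attnum(z))$ and, more delicately, to verify that the index $\attnum$ assigns to each attribute of the \emph{composite} schema is consistent with this concatenation—i.e., that an attribute of $Q_2$ sitting at position $j$ is placed at position $|y| + j$ in the merged tuple. Getting this index arithmetic to line up is what makes the correctness of $\attnum$ on structured queries nontrivial, and it is also what the projection and $a \in Q$ cases silently rely on; once it is established, the remaining cases assemble from the induction hypotheses and Lemma~\ref{thm:sp-pred} without further difficulty.
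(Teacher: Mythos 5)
Your proposal is correct and follows essentially the same route as the paper's proof: structural induction on $Q$ with the identical case analysis, resting on the same correspondences ($\cup$ with \emph{append}, $-'$ with \emph{delete}, $\texttt{cat}$ with \emph{append}, and Lemma~\ref{thm:sp-pred} for the selection case). Your explicit mutual induction on the combined size of queries and predicates is a cleaner formalization of the circularity that the paper dispatches with the informal remark that it ``would not cause circular proof assuming the program is of finite length,'' and your attention to the index shift in the Cartesian-product case spells out bookkeeping the paper leaves implicit under ``the correctness of $\attnum$.''
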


\begin{proof}[Proof Sketch]
By structural induction on $Q$.
\begin{enumerate}
    \item Base case: $Q = R$. $\denot{Q}_{\sigma, \Delta} = \Delta(R)$. Thus, by the correctness of $\attnum$, $\attnum(\Delta) \models \attnum(Q) = \attnum(R)$.
    \item Inductive case: $Q = \Pi_\psi(Q_1)$. Suppose that $\denot{Q_1}_{\sigma, \Delta} = R_1$, $\attnum(\Delta) \uplus \sigma \models \attnum(Q_1) = \attnum(R_1)$, and $\denot{\Pi_\psi(Q_1)}_{\sigma, \Delta} = R$, where $R = \emph{map}(R_1, \lambda x. \emph{filter}(x, \lambda y. \emph{contains}(\emph{first}(y), \psi)))$. From the definition of $\Pi'_l(h)$ and $\Pi_l(t)$ in Figure~\ref{fig:axiom}, we know $\Pi'_l(h) = \emph{filter}(h, ~ \lambda z. \emph{contains}(z, l))$ and $\Pi_l(t) = \emph{map}(t, \lambda x. \emph{filter}(x, \lambda z. \emph{contains}(z, l)))$. By the correctness of $\attnum$, we have
    \[
    \begin{array}{r c l}
        \attnum(R) &=& \emph{map}(\attnum(R_1), ~ \lambda x. \emph{filter}(x, \lambda z. \emph{contains}(z, \attnum(\psi)))) \\
                   &=& \emph{map}(\attnum(Q_1), ~ \lambda x. \emph{filter}(x, \lambda z. \emph{contains}(z, \attnum(\psi)))) \\
                   &=& \attnum(Q)
    \end{array}
    \]
    Thus, $\attnum(\Delta) \uplus \sigma \models \attnum(Q) = \attnum(R)$.
    \item Inductive case: $Q = \sigma_\phi(Q_1)$. Suppose that $\denot{Q_1}_{\sigma, \Delta} = R_1$, $\attnum(\Delta) \uplus \sigma \models \attnum(Q_1) = \attnum(R_1)$, $\denot{\phi}_{\sigma, \Delta, x} = p$, and $\denot{\sigma_\phi(Q_1)}_{\sigma, \Delta} = R$, where $R = \emph{filter}(R_1, \lambda x. p)$. From Lemma~\ref{thm:sp-pred}, we know $\phi$ and $p$ evaluate to the same truth value given the same value of argument $x$. Please note that it would not cause circular proof assuming the program is of finite length. Using the definition of $\sigma_\phi(h)$ in Figure~\ref{fig:axiom-aux}, we can show $\sigma_{\attnum(\phi)}(h) = \emph{filter}(h, \lambda x. \attnum(\phi))$. Then $\attnum(R) = \emph{filter}(\attnum(R_1), \lambda x. \attnum(p)) = \emph{filter}(\attnum(Q_1), \lambda x. \attnum(\phi)) = \attnum(Q)$. Thus, $\attnum(\Delta) \uplus \sigma \models \attnum(Q) = \attnum(R)$.
    \item Inductive case: $Q = Q_1 \times Q_2$. Suppose that $\denot{Q_1}_{\sigma, \Delta} = R_1$, $\attnum(\Delta) \uplus \sigma \models \attnum(Q_1) = \attnum(R_1)$, $\denot{Q_2}_{\sigma, \Delta} = R_2$, $\attnum(\Delta) \uplus \sigma \models \attnum(Q_2) = \attnum(R_2)$, and $\denot{Q_1 \times Q_2}_{\sigma, \Delta} = R$, where
    \[
        R = \emph{foldl}((\lambda ys. \lambda y. \emph{append}(ys, \emph{map}(R_2, \lambda z. \emph{merge}(y, z)))), ~ \nil, R_1)
    \]
    Observe the auxiliary $\texttt{cat}$ function defined in Figure~\ref{fig:axiom} is essentially list append, so $\texttt{cat}(h_1, h_2) = \emph{append}(h_1, h_2)$. Also observe that $h_1 \times' t_2$ would do $\texttt{cat}(h_1, h_2)$ for each $h_2$ in list $t_2$. Thus, $h_1 \times' t_2 = \emph{map}(t_2, \lambda z. \emph{append}(h_1, z))$. The overall $\times$ operator in Figure~\ref{fig:axiom} performs a left hold with operator $\times'$. Therefore,
    \[
        \attnum(Q_1) \times \attnum(Q_2) = \emph{foldl}((\lambda ys. \lambda y. \emph{append}(ys, \emph{map}(\attnum(Q_2), \lambda z. \emph{append}(y, z)))), ~ \nil, \attnum(Q_1))
    \]
    Combined with the correctness of $\attnum$, we know
    \[
    \begin{array}{r c l}
        \attnum(R) &=& \emph{foldl}((\lambda ys. \lambda y. \emph{append}(ys, \emph{map}(\attnum(R_2), \lambda z. \emph{append}(y, z)))), ~ \nil, \attnum(R_1)) \\
                   &=& \emph{foldl}((\lambda ys. \lambda y. \emph{append}(ys, \emph{map}(\attnum(Q_2), \lambda z. \emph{append}(y, z)))), ~ \nil, \attnum(Q_1)) \\
                   &=& \attnum(Q)
    \end{array}
    \]
    Thus, $\attnum(\Delta) \uplus \sigma \models \attnum(Q) = \attnum(R)$.
    \item Inductive case: $Q = Q_1 \cup Q_2$. Suppose that $\denot{Q_1}_{\sigma, \Delta} = R_1$, $\attnum(\Delta) \uplus \sigma \models \attnum(Q_1) = \attnum(R_1)$, $\denot{Q_2}_{\sigma, \Delta} = R_2$, $\attnum(\Delta) \uplus \sigma \models \attnum(Q_2) = \attnum(R_2)$, and $\denot{Q_1 \cup Q_2}_{\sigma, \Delta} = R$, where $R = \emph{append}(R_1, R_2)$. Note that the $\cup$ operator defined in Figure~\ref{fig:axiom} is essentially list append, so we have $\attnum(Q_1) \cup \attnum(Q_2) = \emph{append}(\attnum(Q_1), \attnum(Q_2))$. Then $\attnum(R) = \emph{append}(\attnum(R_1), \attnum(R_2)) = \emph{append}(\attnum(Q_1), \attnum(Q_2)) = \attnum(Q)$. Thus, $\attnum(\Delta) \uplus \sigma \models \attnum(Q) = \attnum(R)$.
    \item Inductive case: $Q = Q_1 - Q_2$. Suppose that $\denot{Q_1}_{\sigma, \Delta} = R_1$, $\attnum(\Delta) \uplus \sigma \models \attnum(Q_1) = \attnum(R_1)$, $\denot{Q_2}_{\sigma, \Delta} = R_2$, $\attnum(\Delta) \uplus \sigma \models \attnum(Q_2) = \attnum(R_2)$, and $\denot{Q_1 - Q_2}_{\sigma, \Delta} = R$, where
    \[
        R = \emph{foldl}((\lambda ys. \lambda y. \emph{delete}(y, ys)), R_1, R_2)
    \]
    Observe the auxiliary $-'$ operator defined in Figure~\ref{fig:axiom} is essentially list delete, and the minus operator $-$ performs a left fold using $-'$, so we have
    \[
        \attnum(Q_1) - \attnum(Q_2) = \emph{foldl}((\lambda ys. \lambda y. \emph{delete}(y, ys)), \attnum(Q_1), \attnum(Q_2))
    \]
    Then we know,
    \[
    \begin{array}{r c l}
        \attnum(R) &=& \emph{foldl}((\lambda ys. \lambda y. \emph{delete}(y, ys)), \attnum(R_1), \attnum(R_2)) \\
                   &=& \emph{foldl}((\lambda ys. \lambda y. \emph{delete}(y, ys)), \attnum(Q_1), \attnum(Q_2)) \\
                   &=& \attnum(Q)
    \end{array}
    \]
    Thus, $\attnum(\Delta) \uplus \sigma \models \attnum(Q) = \attnum(R)$.
\end{enumerate}
\end{proof}

%%%%%%%%%%%%%%%%%%%%%%%%%%%%%%%%%%%%%%%%%%%%%%%%%%%%%%%%%%%%%%%%
% sp insertion
%%%%%%%%%%%%%%%%%%%%%%%%%%%%%%%%%%%%%%%%%%%%%%%%%%%%%%%%%%%%%%%%

\begin{lemma}(\textbf{$sp$ for insertion})\label{thm:sp-ins}
Suppose $U = \ins(R, \{a_1 : v_1, \ldots , a_n : v_n\})$, and let $\Phi$ be a $\tra$ formula. If $(\Delta, \sigma) \sim \Phi$ and $\denot{U}_{\sigma, \Delta} = \Delta'$, then $(\Delta', \sigma) \sim sp(\Phi, U)$.
\end{lemma}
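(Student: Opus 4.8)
The plan is to reduce the agreement claim $(\Delta', \sigma) \sim sp(\Phi, U)$ to its definitional meaning and then exhibit an explicit witness for the existential quantifier introduced by the strongest-postcondition rule for insertion. First I would unfold the three relevant definitions. By the Agreement definition, the goal becomes $\attnum(\Delta') \uplus \sigma \models \exists x.\ (R = x \cup [r]) \land \Phi[x/R]$, where $r = [v_1, \ldots, v_n]$ as in Figure~\ref{fig:sp}. By the denotational semantics of $\ins$ in Figure~\ref{fig:semantics}, we have $\Delta' = \Delta[R \leftarrow \emph{append}(\Delta(R), t[\sigma])]$ with $t = \{a_1: v_1, \ldots, a_n: v_n\}$; consequently $\attnum(\Delta')$ agrees with $\attnum(\Delta)$ on every relation other than $R$, and it maps $R$ to $\emph{append}(\attnum(\Delta(R)), [\,r\,])$ once $r$ is valuated under $\sigma$.

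The natural witness for the existentially quantified $x$ is the old contents of $R$, namely $x = \attnum(\Delta(R))$, and under this choice I would discharge the two conjuncts separately. For the equation $R = x \cup [r]$: since the $\cup$ operator of $\tra$ is axiomatized as list append in Figure~\ref{fig:axiom}, its right-hand side evaluates to $\emph{append}(\attnum(\Delta(R)), [\,r\,])$, which is exactly how $\attnum(\Delta')$ interprets $R$, so the equation holds. The only bookkeeping needed here is to match the constant list $r = [v_1, \ldots, v_n]$ valuated under $\sigma$ against $\attnum(t[\sigma])$, which follows from the correctness of $\attnum$ already assumed in Lemmas~\ref{thm:sp-pred} and~\ref{thm:sp-query}.

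For the second conjunct $\Phi[x/R]$, I would argue that evaluating $\Phi[x/R]$ under $\attnum(\Delta') \uplus \sigma \uplus [x \mapsto \attnum(\Delta(R))]$ coincides with evaluating $\Phi$ under $\attnum(\Delta) \uplus \sigma$. This is the crux of the argument: because every syntactic occurrence of $R$ in $\Phi$ has been replaced by the fresh variable $x$ (fresh by the convention governing the quantifier introduced by $sp$), the interpretation of $R$ in $\attnum(\Delta')$ is irrelevant; the formula instead reads the value $\attnum(\Delta(R))$ through $x$, and it reads every other relation from $\attnum(\Delta')$, which agrees with $\attnum(\Delta)$. Hence $\Phi[x/R]$ holds precisely when $\Phi$ holds on the original instance, and the latter is exactly the hypothesis $(\Delta, \sigma) \sim \Phi$.

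The main obstacle I anticipate is making this substitution argument rigorous rather than merely intuitive. Formally it requires a substitution lemma stating that, for any $\tra$ term or formula, interpreting $\Phi[x/R]$ with $x$ bound to a value $c$ equals interpreting $\Phi$ under the instance whose $R$-entry is $c$, with $x$ chosen fresh so that no variable capture occurs. This is a routine structural induction over the grammar of $\tra$ terms and formulas from Figure~\ref{fig:theory}, but it is the part that must be stated carefully, together with the observation that $\Delta$ and $\Delta'$ differ only at $R$ so that all other relation symbols read the same value in both models.
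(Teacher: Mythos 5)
Your proposal is correct and follows essentially the same route as the paper's proof: both instantiate the existential with the old value of $R$, discharge the equation conjunct by identifying $\cup$ with list append applied to the old contents and the inserted tuple, and discharge $\Phi[x/R]$ by observing that the substituted formula no longer mentions $R$ while all other relations are unchanged between $\Delta$ and $\Delta'$. The only difference is that you explicitly flag the substitution lemma needed to make the last step rigorous, which the paper leaves implicit ("since $R$ does not occur in $\Phi[z/R]$ any more, it can be assigned to any value"); this is a reasonable refinement, not a different argument.
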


\begin{proof}
Assume $t' = \{a_1 : v_1, \ldots, a_n : v_n\}$ and $t = [v_1, \ldots, v_n]$, then by the definition of $sp$ we know $sp(\Phi, \ins(R, t')) = \exists z.~(R = z \cup [t] \land \Phi[z/R])$. The problem becomes given
\[
\begin{array}{r l l}
    (1) & \attnum(\Delta) \uplus \sigma \models \Phi & \text{(by definition of $(\Delta, \sigma) \sim \Phi$)} \\
    (2) & \attnum(\Delta) \uplus \sigma \models R = c & \text{(assuming $R$ evaluates to $c$ under $\Delta, \sigma$)} \\
    (3) & \sigma \models t = c' & \text{(assuming $t$ evaluates to $c'$ under $\sigma$)} \\
\end{array}
\]
we would like to prove $\attnum(\Delta') \uplus \sigma \models \exists z. (R = z \cup [t] \land \Phi[z/R])$, where $\Delta' = \Delta[R \leftarrow \emph{append}(R, t[\sigma])]$.

In fact, $\attnum(\Delta) \uplus \sigma \uplus [z \mapsto c] \models \Phi[z/R]$ holds directly from (1) and (2). Since $R$ does not occur in $\Phi[z/R]$ any more, it can be assigned to any value. Therefore,
\[
    \attnum(\Delta') \uplus \sigma \uplus [z \mapsto c] \models \Phi[z/R] \qquad (*)
\]

Also, $\Delta'(R) = \emph{append}(c, c')$ holds because of (2), (3) and the fact that $\Delta' = \Delta[R \leftarrow \emph{append}(R, t[\sigma])]$. Since the definition of $\cup$ is the same as list append, we have
\[
    \attnum(\Delta') \uplus \sigma \models R = c \cup c'
\]
Thus, $\attnum(\Delta') \uplus \sigma \uplus [z \mapsto c] \models R = z \cup c'$. Again, since $\sigma \models t=c'$, it holds that
\[
    \attnum(\Delta') \uplus \sigma \uplus [z \mapsto c] \models R = z \cup t \qquad (**)
\]
Put $(*)$ and $(**)$ together, we have
\[
    \attnum(\Delta') \uplus \sigma \uplus [z \mapsto c] \models R = z \cup t \land \Phi[z/R]
\]
Hence, 
\[
    \attnum(\Delta') \uplus \sigma \models \exists z. ~ (R = z \cup t \land \Phi[z/R])
\]
i.e., $(\Delta', \sigma) \sim sp(\Phi, U)$.
\end{proof}

%%%%%%%%%%%%%%%%%%%%%%%%%%%%%%%%%%%%%%%%%%%%%%%%%%%%%%%%%%%%%%%%
% sp deletion
%%%%%%%%%%%%%%%%%%%%%%%%%%%%%%%%%%%%%%%%%%%%%%%%%%%%%%%%%%%%%%%%

\begin{lemma}(\textbf{$sp$ for deletion})\label{thm:sp-del}
Suppose $U = \del(R, \phi)$, and let $\Phi$ be a $\tra$ formula. If $(\Delta, \sigma) \sim \Phi$ and $\denot{U}_{\sigma, \Delta} = \Delta'$, then $(\Delta', \sigma) \sim sp(\Phi, U)$.
\end{lemma}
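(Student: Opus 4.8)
The plan is to mirror the structure of the proof of Lemma~\ref{thm:sp-ins} for insertion. First I would unfold all the relevant definitions. By the definition of $sp$ we have $sp(\Phi, \del(R,\phi)) = \exists x.~(R = \sigma_{\attnum(\neg\phi)}(x)) \land \Phi[x/R]$, and by the denotational semantics of deletion the updated instance is $\Delta' = \Delta[R \leftarrow \emph{filter}(\Delta(R), \lambda y.~\neg\denot{\phi}_{\sigma,\Delta,y})]$. Writing $c$ for the list $\attnum(\Delta)(R)$ (the value of relation $R$ before the update), the agreement hypothesis $(\Delta,\sigma)\sim\Phi$ reads $\attnum(\Delta) \uplus \sigma \models \Phi$, and the goal $(\Delta',\sigma)\sim sp(\Phi,U)$ reads $\attnum(\Delta')\uplus\sigma \models \exists x.~(R = \sigma_{\attnum(\neg\phi)}(x)) \land \Phi[x/R]$. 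I would discharge the existential by choosing the witness $x = c$ and then prove the two conjuncts separately under the extended valuation $\attnum(\Delta')\uplus\sigma\uplus[x\mapsto c]$.

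The conjunct $\Phi[x/R]$ is routine. Since $R$ is replaced by the fresh variable $x$, the formula $\Phi[x/R]$ no longer mentions $R$; as $\Delta'$ agrees with $\Delta$ on every relation other than $R$, and the binding $x \mapsto c$ supplies exactly the old value $\attnum(\Delta)(R)$, the hypothesis $\attnum(\Delta)\uplus\sigma\models\Phi$ transfers to give $\attnum(\Delta')\uplus\sigma\uplus[x\mapsto c]\models\Phi[x/R]$. This is identical in spirit to the corresponding step in the insertion case.

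The crux, and the step I expect to be the main obstacle, is the conjunct $R = \sigma_{\attnum(\neg\phi)}(x)$, which under the witness $x = c$ amounts to showing $\attnum(\Delta')(R) = \sigma_{\attnum(\neg\phi)}(c)$. By the denotational semantics, $\attnum(\Delta')(R) = \emph{filter}(c, \lambda y.~\neg\denot{\phi}_{\sigma,\Delta,y})$, so I must argue that this semantic filter coincides with the logical selection operator. I would use the characterization of selection in terms of \emph{filter} that is already established in the proof of Lemma~\ref{thm:sp-query}, namely $\sigma_{\attnum(\neg\phi)}(c) = \emph{filter}(c, \lambda y.~\attnum(\neg\phi))$, which follows from the selection axioms of Figure~\ref{fig:axiom} and Figure~\ref{fig:axiom-aux}. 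What remains is to show the two filtering predicates agree on every tuple, i.e. that $\neg\denot{\phi}_{\sigma,\Delta,h}$ and $\attnum(\neg\phi)(h)$ have the same truth value for each $h$ occurring in $c$. This is precisely the content of Lemma~\ref{thm:sp-pred} applied to $\neg\phi$: it guarantees that the symbolic predicate obtained from $\denot{\neg\phi}_{\sigma,\Delta,\cdot}$ and the $\tra$-level predicate $\attnum(\neg\phi)$ evaluate identically on every concrete tuple, so \emph{filter} retains and discards exactly the tuples that selection does. Hence $\attnum(\Delta')(R) = \sigma_{\attnum(\neg\phi)}(c)$. Combining the two conjuncts yields $\attnum(\Delta')\uplus\sigma\uplus[x\mapsto c]\models (R = \sigma_{\attnum(\neg\phi)}(x)) \land \Phi[x/R]$, and existentially generalizing over $x$ gives $(\Delta',\sigma)\sim sp(\Phi,U)$, completing the proof.
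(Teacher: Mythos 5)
Your proposal is correct and follows essentially the same route as the paper's own proof: you pick the same witness (the old value $c$ of $R$) for the existentially quantified variable, transfer $\Phi[x/R]$ by the same observation that $R$ no longer occurs in it, and establish the conjunct $R = \sigma_{\attnum(\neg\phi)}(x)$ exactly as the paper does, via the \emph{filter} characterization of selection together with Lemma~\ref{thm:sp-pred}. The only differences are cosmetic (variable naming, and applying Lemma~\ref{thm:sp-pred} to $\neg\phi$ rather than to $\phi$ followed by negation).
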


\begin{proof}
By the definition of $sp$ we know $sp(\Phi, \del(R, \phi)) = \exists z.~(R = \sigma_{\neg \attnum(\phi)}(R) \land \Phi[z/R])$. The problem becomes given
\[
\begin{array}{r l l}
    (1) & \attnum(\Delta) \uplus \sigma \models \Phi & \text{(by definition of $(\Delta, \sigma) \sim \Phi$)} \\
    (2) & \attnum(\Delta) \uplus \sigma \models R = c & \text{(assuming $R$ evaluates to $c$ under $\Delta, \sigma$)} \\
    (3) & \denot{\phi}_{\sigma, \Delta, x} = p & \text{(assuming $\phi$ evaluates to $p$ under $\Delta, \sigma$)} \\
\end{array}
\]
we would like to prove $\attnum(\Delta') \uplus \sigma \models \exists z. (R = \sigma_{\attnum(\neg \phi)}(R) \land \Phi[z/R])$, where
\[
    \Delta' = \Delta[R \leftarrow \emph{filter}(\Delta(R), \lambda x. \neg \denot{\phi}_{\sigma, \Delta, x})]
\]

Observe that $\attnum(\Delta) \uplus \sigma \uplus [z \mapsto c] \models \Phi[z/R]$ holds directly from (1) and (2). Since $R$ does not occur in $\Phi[z/R]$, it can be assigned to any value. Therefore,
\[
    \attnum(\Delta') \uplus \sigma \uplus [z \mapsto c] \models \Phi[z/R] \qquad (*)
\]

Also observe that $\Delta'(R) = \emph{filter}(c, \lambda x. \neg p)$ holds because of (2) and (3). From Lemma~\ref{thm:sp-pred}, we know $p$ and $\phi$ evaluate to same truth value for the same value of argument $x$, and using definition of $\sigma_\phi(h)$ in Figure~\ref{fig:axiom-aux}, we can show $\sigma_{\attnum(\neg \phi)}(h)=\emph{filter}(h, \lambda x. \attnum(\neg \phi))$. Thus,
\[
    \attnum(\Delta') \uplus \sigma \models R = \sigma_{\attnum(\neg \phi)}(c)
\]
and
\[
    \attnum(\Delta') \uplus \sigma \uplus [z \mapsto c] \models R = \sigma_{\attnum(\neg \phi)}(z) \qquad (**)
\]
Combining $(*)$ and $(**)$, we have
\[
    \attnum(\Delta') \uplus \sigma \uplus [z \mapsto c] \models R = \sigma_{\attnum(\neg \phi)}(z) \land \Phi[z/R]
\]
Hence, 
\[
    \attnum(\Delta') \uplus \sigma \models \exists z. ~ (R = \sigma_{\attnum(\neg \phi)}(z) \land \Phi[z/R])
\]
i.e., $(\Delta', \sigma) \sim sp(\Phi, U)$.
\end{proof}

%%%%%%%%%%%%%%%%%%%%%%%%%%%%%%%%%%%%%%%%%%%%%%%%%%%%%%%%%%%%%%%%
% sp update
%%%%%%%%%%%%%%%%%%%%%%%%%%%%%%%%%%%%%%%%%%%%%%%%%%%%%%%%%%%%%%%%

\begin{lemma}(\textbf{$sp$ for update})\label{thm:sp-upd}
Suppose $U = \upd(R, \phi, a, v)$, and let $\Phi$ be a $\tra$ formula. If $(\Delta, \sigma) \sim \Phi$ and $\denot{U}_{\sigma, \Delta} = \Delta'$, then $(\Delta', \sigma) \sim sp(\Phi, U)$.
\end{lemma}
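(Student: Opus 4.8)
The plan is to follow the template of Lemma~\ref{thm:sp-del} almost verbatim, since $sp(\Phi, \upd(R,\phi,a,v))$ differs from $sp(\Phi, \del(R,\phi))$ only by the extra union term $\sigma_{\attnum(\phi)}(x)\awr{\attnum(a)}{v}$. Unfolding the definition from Figure~\ref{fig:sp}, we have $sp(\Phi, U) = \exists z.~(R = \sigma_{\attnum(\neg\phi)}(z) \cup \sigma_{\attnum(\phi)}(z)\awr{\attnum(a)}{v}) \land \Phi[z/R]$. I would assume (1) $\attnum(\Delta) \uplus \sigma \models \Phi$, (2) $\attnum(\Delta) \uplus \sigma \models R = c$, and (3) $\denot{\phi}_{\sigma,\Delta,x} = p$, with $\Delta'$ determined by the update rule of Figure~\ref{fig:semantics}. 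The goal is $\attnum(\Delta') \uplus \sigma \models sp(\Phi, U)$, which I would establish by witnessing the existential $z$ with $c$.

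The $\Phi[z/R]$ conjunct is handled exactly as in the deletion case: since $R$ does not occur in $\Phi[z/R]$, facts (1) and (2) give $\attnum(\Delta') \uplus \sigma \uplus [z \mapsto c] \models \Phi[z/R]$; call this $(*)$. The substantive work is to establish the relational equation $\attnum(\Delta') \uplus \sigma \uplus [z \mapsto c] \models R = \sigma_{\attnum(\neg\phi)}(z) \cup \sigma_{\attnum(\phi)}(z)\awr{\attnum(a)}{v}$, call it $(**)$. Because $\cup$ is list append, and because $\Delta'(R)$ is itself an $\emph{append}$ of two lists in the semantics, I would prove $(**)$ by matching the two arguments of the append separately. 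The first argument $\emph{filter}(c, \lambda x.\neg p)$ equals $\sigma_{\attnum(\neg\phi)}(c)$ by precisely the deletion argument, i.e., by Lemma~\ref{thm:sp-pred} (so that $p$ and $\phi$ agree pointwise) together with the selection identity $\sigma_{\attnum(\phi)}(h) = \emph{filter}(h, \lambda x. \attnum(\phi))$ derived from Figure~\ref{fig:axiom-aux}.

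The new part is the second append argument, $\emph{map}(\emph{filter}(c, \lambda x. p), \lambda x. x[a \leftarrow v[\sigma]])$, which must be shown equal to $\sigma_{\attnum(\phi)}(c)\awr{\attnum(a)}{v}$. I would first reuse the selection identity to rewrite $\emph{filter}(c, \lambda x. p)$ as $\sigma_{\attnum(\phi)}(c)$, reducing the goal to showing that $\awr{\attnum(a)}{v}$ applied to a relation computes the same list as $\emph{map}(\cdot, \lambda x. x[a \leftarrow v[\sigma]])$. This is where the main obstacle lies: I need an auxiliary induction connecting the tuple-level operator $\texttt{upd}(h, i, v)$ of Figure~\ref{fig:axiom} — which recursively walks to index $i$ and overwrites the value there — with the map-level update $x[a \leftarrow v[\sigma]]$. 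Concretely, using the correctness of $\attnum$ (which sends attribute $a$ to its index $i$ and interprets $v$ as $v[\sigma]$ under $\sigma$), a structural induction on the tuple $h$ shows $\texttt{upd}(h, i, v) = h[a \leftarrow v[\sigma]]$; then the two defining rules for $\awr{a_i}{v}$ in Figure~\ref{fig:axiom} lift this pointwise over the list of tuples in exactly the same recursion pattern as $\emph{map}$, giving the desired equality. Substituting $z \mapsto c$ then yields $(**)$.

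Finally I would conjoin $(*)$ and $(**)$ to obtain $\attnum(\Delta') \uplus \sigma \uplus [z \mapsto c] \models (R = \sigma_{\attnum(\neg\phi)}(z) \cup \sigma_{\attnum(\phi)}(z)\awr{\attnum(a)}{v}) \land \Phi[z/R]$, and existentially generalize over $z$ to conclude $\attnum(\Delta') \uplus \sigma \models sp(\Phi, U)$, i.e., $(\Delta', \sigma) \sim sp(\Phi, U)$. The only genuinely novel step relative to the insertion and deletion lemmas is the $\texttt{upd}$/$\awr{}{}$ correspondence; everything else is a direct transcription of the earlier proofs.
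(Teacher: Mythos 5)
Your proposal is correct and follows essentially the same route as the paper's proof: witness the existential with $c$, carry over the $\Phi[z/R]$ conjunct unchanged, use Lemma~\ref{thm:sp-pred} plus the selection identity for the $\sigma_{\attnum(\neg\phi)}$ and $\sigma_{\attnum(\phi)}$ parts, and identify $\awr{\attnum(a)}{v}$ with the map-level tuple update before recombining via $\cup$-as-append. The only difference is presentational: where the paper asserts $L\awr{\attnum(a)}{v} = \emph{map}(L, \lambda y.\ y[\attnum(a) \leftarrow v])$ directly ``from the definition,'' you make explicit the small structural induction on tuples and lists behind that identity, which is a faithful elaboration rather than a different argument.
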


\begin{proof}
    By the definition of $sp$ we know
\[
    sp(\Phi, \upd(R, \phi, a, v)) = \exists z.~((R = \sigma_{\neg \attnum(\phi)}(R) \cup \sigma_{\attnum(\phi)}(R)\awr{\attnum(a)}{v}) \land \Phi[z/R])
\]
The problem becomes given
\[
\begin{array}{r l l}
    (1) & \attnum(\Delta) \uplus \sigma \models \Phi & \text{(by definition of $(\Delta, \sigma) \sim \Phi$)} \\
    (2) & \attnum(\Delta) \uplus \sigma \models R = c & \text{(assuming $R$ evaluates to $c$ under $\Delta, \sigma$)} \\
    (3) & \denot{\phi}_{\sigma, \Delta, x} = p & \text{(assuming $\phi$ evaluates to $p$ under $\Delta, \sigma$)} \\
\end{array}
\]
we would like to prove $\attnum(\Delta') \uplus \sigma \models \exists z. ((R = \sigma_{\attnum(\neg \phi)}(R) \cup \sigma_{\attnum(\phi)}(R)\awr{\attnum(a)}{v}) \land \Phi[z/R])$, where
\[
    \Delta' = \Delta
    \left [
        R \leftarrow \emph{append} \left (
            \begin{array}{l}
            \emph{filter}(\Delta(R), \lambda x. {\neg \denot{\phi}}_{\sigma, \Delta, x}), \\
            \emph{map}(\emph{filter}(\Delta(R), \lambda x.  {\denot{\phi}}_{\sigma, \Delta, x}), \lambda y. y[a \leftarrow v[\sigma]])
            \end{array}
        \right )
    \right ]
\]

Observe that $\attnum(\Delta) \uplus \sigma \uplus [z \mapsto c] \models \Phi[z/R]$ holds directly from (1) and (2). Since $R$ does not occur in $\Phi[z/R]$, it can be assigned to any value. Therefore,
\[
    \attnum(\Delta') \uplus \sigma \uplus [z \mapsto c] \models \Phi[z/R] \qquad (*)
\]

Please note that given premises (2) and (3), $\Delta'(R) = \emph{append}(R_1, \emph{map}(R_2, \lambda y. y[i \leftarrow v[\sigma]]))$, where $R_1 = \emph{filter}(c, \lambda x. \neg p)$ and $R_2 = \emph{filter}(c, \lambda x. p)$. From Lemma~\ref{thm:sp-pred}, we know $p$ and $\phi$ evaluate to same truth value for the same value of argument $x$, and using definition of $\sigma_\phi(h)$ in Figure~\ref{fig:axiom-aux}, we can show $\sigma_{\attnum(\phi)}(h) = \emph{filter}(h, \lambda x. \attnum(\phi))$ and $\sigma_{\attnum(\neg \phi)}(h) = \emph{filter}(h, \lambda x. \attnum(\neg \phi))$. Thus,
\[
\attnum(\Delta') \uplus \sigma \uplus [z \mapsto c] \models R_1 =\sigma_{\attnum(\neg \phi)}(c) \qquad (**)
\]
\[
\attnum(\Delta') \uplus \sigma \uplus [z \mapsto c] \models R_2 =\sigma_{\attnum(\phi)}(c) \qquad 
\]

Also, from the definition of $\cdot \awr{\cdot}{\cdot}$, we have $L \awr{\attnum(a)}{v} = \emph{map}(L, \lambda y. ~ y[\attnum(a) \leftarrow v])$. Let $R_3 = \emph{map}(R_2, \lambda y. y[a \leftarrow v[\sigma]])$ for convenience, then
\[
    \attnum(\Delta') \uplus \sigma \uplus [z \mapsto c] \models R_3 = \sigma_{\attnum(\phi)}(c)\awr{\attnum(a)}{v} \qquad (***)
\]
Combining $(**)$, $(***)$, $\Delta'(R) = \emph{append}(R_1, R_3)$, and the definition of $\cup$ is the same as list append, we get
\[
    \attnum(\Delta') \uplus \sigma \uplus [z \mapsto c] \models R = \sigma_{\attnum(\neg \phi)}(c) \cup \sigma_{\attnum(\phi)}(c)\awr{\attnum(a)}{v}
\]
and thus,
\[
    \attnum(\Delta') \uplus \sigma \uplus [z \mapsto c] \models R = \sigma_{\attnum(\neg \phi)}(z) \cup \sigma_{\attnum(\phi)}(z)\awr{\attnum(a)}{v}
\]
Put together with $(*)$,
\[
    \attnum(\Delta') \uplus \sigma \uplus [z \mapsto c] \models (R = \sigma_{\attnum(\neg \phi)}(z) \cup \sigma_{\attnum(\phi)}(z)\awr{\attnum(a)}{v}) \land \Phi[z/R]
\]
Hence,
\[
    \attnum(\Delta') \uplus \sigma \models \exists z. ~ ((R = \sigma_{\attnum(\neg \phi)}(z) \cup \sigma_{\attnum(\phi)}(z)\awr{\attnum(a)}{v}) \land \Phi[z/R])
\]
i.e., $(\Delta', \sigma) \sim sp(\Phi, U)$.
\end{proof}

%%%%%%%%%%%%%%%%%%%%%%%%%%%%%%%%%%%%%%%%%%%%%%%%%%%%%%%%%%%%%%%%
% sp soundness
%%%%%%%%%%%%%%%%%%%%%%%%%%%%%%%%%%%%%%%%%%%%%%%%%%%%%%%%%%%%%%%%

\vspace{0.1in}

\begin{proof}[Proof of Theorem~\ref{thm:sp}] \textbf{(Soundness of sp)}
By structural induction on $U$.
\begin{enumerate}
    \item Base case: $U = \ins(R, \{a_1 : v_1, \ldots, a_n : v_n\})$. Proven by Lemma~\ref{thm:sp-ins}.
    \item Base case: $U = \del(R, \phi)$. Proven by Lemma~\ref{thm:sp-del}.
    \item Base case: $U = \upd(R, \phi, a, v)$. Proven by Lemma~\ref{thm:sp-upd}.
    \item Inductive case: $U = U_1 ; U_2$. Suppose for any given $\tra$-formula $\Phi_1$ such that if $(\Delta_1, \sigma) \sim \Phi_1$ holds, $\Delta'_1 = \denot{U_1}_{\sigma, \Delta_1}$, then $(\Delta'_1, \sigma) \sim sp(\Phi_1, U_1)$. Also suppose for any given $\tra$-formula $\Phi_2$ such that if $(\Delta_2, \sigma) \sim \Phi_2$ holds, $\Delta'_2 = \denot{U_2}_{\sigma, \Delta_2}$, then $(\Delta'_2, \sigma) \sim sp(\Phi_2, U_2)$. Now consider $U = U_1; U_2$, $\tra$-formula $\Phi$, and $\Delta, \sigma$ such that $(\Delta, \sigma) \sim \Phi$. Let $\Delta_1 = \Delta$, $\Phi_1 = \Phi$ and $\Delta'' = \denot{U_1}_{\sigma, \Delta}$, we know $(\Delta'', \sigma) \sim sp(\Phi, U_1)$. Then let $\Delta_2 = \Delta''$ and $\Phi_2 = sp(\Phi, U_1)$, and $\Delta' = \denot{U_2}_{\sigma, \Delta''}$, we know $(\Delta', \sigma) \sim sp(sp(\Phi, U_1), U_2)$. Also note that $\Delta' = \denot{U_2}_{\sigma, \Delta''} = \denot{U_1 ; U_2}_{\sigma, \Delta}$, we have proven the agreement.
\end{enumerate}
\end{proof}

\fi

\end{document}